\newcommand{\N}{\mathbb{N}}
\newcommand{\I}{\mathbb{I}}
\newcommand{\R}{\mathbb{R}}
\newcommand{\prob}{\mathbb{P}}
\newcommand{\E}{\mathbb{E}}
\newtheorem{fact}{Fact}
\def\defeq{:=}
\newcommand{\sellreg}{\text{Reg}_{\text{sell}}}
\newcommand{\buyreg}{\text{Reg}_{\text{buy}}}
\newcommand{\J}{J}
\newcommand{\rev}{\pi}
\newcommand{\util}{U}
\newcommand{\V}{V}
\newcommand{\D}{D}
\newcommand{\term}{m^{*}}
\newcommand{\med}{\text{med}}
\newcommand{\iter}{\text{iter}}
\newcommand{\lef}{\text{L}}
\newcommand{\rig}{\text{R}}
\newcommand{\opt}{\textsc{OPT}}
\newcommand{\valp}{\bm{g}}
\newcommand{\valpe}{g}
\newcommand{\thr}{\psi}
\newcommand{\norm}[1]{\lVert#1\rVert}
\newcommand{\tote}{L}
\newcommand{\jroi}{r}
\newcommand{\jbud}{b}
\newcommand{\ROI}{\texttt{R}}
\newcommand{\bud}{\texttt{B}}
\newcommand{\budprim}{\texttt{P-Budget}}
\newcommand{\roiprim}{\texttt{P-ROI}}
\newcommand{\cond}[1]{\left. \right|}
\def \rev{{\sf{\small{rev}}}}
\newtheorem{theorem}{Theorem}[section]
\newtheorem{lemma}[theorem]{Lemma}
\newtheorem{remark}{Remark}[section]
\newtheorem{proposition}[theorem]{Proposition}
\newtheorem{definition}{Definition}[section]
\newtheorem{assumption}{Assumption}[section]
\begin{document}

\TITLE{Learning to Price against a Budget and ROI Constrained Buyer}
\ARTICLEAUTHORS{%
\AUTHOR{Negin Golrezaei}
\AFF{Sloan School of Management, Massachusetts Institute of Technology, \EMAIL{golrezae@mit.edu} \URL{}}
\AUTHOR{Patrick Jaillet}
\AFF{Department of Electrical Engineering and Computer Science, Massachusetts Institute of Technology, \EMAIL{jaillet@mit.edu} \URL{}}
\AUTHOR{Jason Cheuk Nam Liang}
\AFF{Operations Research Center, Massachusetts Institute of Technology, \EMAIL{jcnliang@mit.edu} \URL{}}
\AUTHOR{Vahab Mirrokni}
\AFF{Google Research, \EMAIL{mirrokni@google.com}\URL{}}
} 

\ABSTRACT{Internet advertisers (buyers) repeatedly procure ad impressions from ad platforms (sellers) with the aim to maximize total conversion (i.e. ad value) while respecting both budget and return-on-investment (ROI) constraints for efficient utilization of limited monetary resources. Facing such a constrained buyer who aims to learn her optimal strategy to acquire impressions, we study from a seller's perspective how to learn and price ad impressions through repeated posted price mechanisms to maximize revenue. For this two-sided learning setup, we propose a learning algorithm for the seller that utilizes an episodic binary-search procedure to identify a revenue-optimal selling price.  We show that such a simple learning algorithm enjoys low seller regret when within each episode, the budget and ROI constrained buyer approximately best responds to the posted price.  We present simple yet natural buyer's bidding algorithms under which the buyer approximately best responds while satisfying budget and ROI constraints, leading to a low regret for our proposed seller pricing algorithm. The design of our seller algorithm is motivated by the fact that the seller's revenue function admits a bell-shaped structure when the buyer best responds to prices under budget and ROI constraints, enabling our seller algorithm to identify revenue-optimal selling prices efficiently.
}

\KEYWORDS{online advertising, pricing, return-on-investment, online learning, mechanism design}
\maketitle

\section{Introduction}
\label{sec:intro}
In online advertising markets, advertisers (i.e. buyers) run ad campaigns by procuring ad impressions in selling mechanisms held by the platform (i.e. seller). To efficiently utilize limited monetary resources that are allocated to a certain campaign, 
advertisers' strategies in the procurement process are typically subject to financial constraints, which generally include budget and \emph{return-on-investment} (ROI) constraints. Budget constraints primarily reflect advertisers' monetary limits due to organizational planning, whereas
ROI constraints enforces  the desired performance/return  on the amount of capital spent \cite{kireyev2016display, golrezaei2018auction, balseiro2019black}. The presence of such financial constraints, along with the increasing availability of real time data, motivates buyers' deployment of complex algorithms to procure impressions. Such financial constraint and algorithm driven buyer behavior introduces significant challenges to sellers' design of selling mechanisms, primarily due to the fact that buyer algorithms adapt quickly and constantly to data generated by buyer-seller interactions, and also sellers' lack of information on buyers' model primitives such as target ROI, budget, buyer algorithm, etc. In this work, we address the following question:

\textit{From the perspective of a seller (e.g. ad platform), what is an optimal selling strategy against a buyer who adopts value-maximizing algorithms under both budget and ROI constraints?}

 We study the setting where a seller repeatedly sells items to a single budget and ROI constrained buyer through a posted price mechanism. This single-buyer setup is primarily motivated by ad platforms' targeting practices that enable advertisers to target users who may be more interested in their ads, as such practices along with advertisers’ heterogeneous targeting criteria lead to a very small number of advertisers/buyers per ad impression, justifying our single-buyer setup. Throughout the repeated mechanism, the seller posts a price for an impression during each period, and the buyer decides on whether to accept and pay the posted price for the sold impression. Our key focus lies in the practical two-sided learning setup where buyers adopt learning algorithms under both budget and ROI constraints, and the seller sets prices algorithmically based on past transactions. The key challenge for the seller's problem of interest is two-fold: the seller does not know the buyer parameters such as target ROI, budget or algorithm, and buyer actions are ``non-stationary'' as buyer algorithms constantly adapt to the past buyer-seller algorithmic interactions. The goal of this work is to design a revenue-maximizing seller pricing strategy against algorthmic and financially constrained buyers in such a limited information setting.

 The main contribution of this work is that we propose a simple seller algorithm that does not require explicitly learning buyer's parameters nor reverse engineering the buyer's learning algorithms. We show that our algorithm is feasible in achieving high revenue under limited information by exploiting a salient property of the seller revenue function against financially-constrained buyers. In particular, we summarize our contributions as followed:


\paragraph{Main contributions.}
We first characterize the seller revenue function against a clairvoyant budget and ROI constrained buyer who always best responds to posted prices. To begin with, we start by characterizing the buyer's best response to a posted price is a ``threshold strategy'' which instructs the buyer to accept the sold item if her valuation exceeds a certain threshold that depends on the posted price. With this characterization of buyer best response, we show that the seller revenue function against a best-responding clairvoyant buyer admits a salient ``bell-shaped'' structure: as the seller increases prices, the corresponding per-period seller revenue first monotonically increases and decreases. We argue that such a structure is exploitable by the seller to extract revenue even without knowing buyer model primitives such as value distribution, budget rate, and target ROI. 

We exploit this bell-shaped structure and
design an episodic binary search seller pricing algorithm. In each episode, the algorithm sets a single price, and then moves on to the next episode with an updated price based on a binary search procedure w.r.t. the realized revenue of previous prices.  
We also characterize general buyer-algorithm adaptiveness properties that allow buyers to adapt quickly to prices in seller episodes, and present regret analyses against buyer algorithms that are adaptive to seller prices in the sense of our defined adaptiveness properties. Moreover, we argue that seller regret of our proposed algorithm is driven by the agent (i.e. 
seller or buyer) who incurs a larger loss in terms of learning error.

Finally, we analyze example buyer algorithm examples that satisfy the aforementioned adaptiveness properties, while simultaneously aim to maximize total value under both budget and ROI constraints. In particular, we consider clairvoyant buyers who best respond in each period, as well as buyers who make decisions based on machine-learned advice that take the form of value distribution estimates. For each of these buyers, we show that both buyer and seller regret are sublinear.
\subsection{Literature review}

\paragraph{Mechanism design for budget and ROI constrained buyers.}
One relevant line of research addresses the mechanism design problem for budget or ROI constrained buyers.
As one of the pioneering works regarding mechanism for financially constrained buyers,  \cite{laffont1996optimal} derives the optimal mechanism for symmetric buyers and public budget information. 
On the contrary, a more recent paper \cite{pai2014optimal} studies the general multidimensional mechanism design setting against buyers with private budgets. Regarding ROI constrained buyers, \cite{golrezaei2018auction} shows that the optimal mechanism for symmetric ROI-constrained buyers is either second-price auctions with reduced reserve prices or subsidized second-price auctions. The work also derives an optimal mechanism for asymmetric ROI buyers. There is also a wide range of work that study dynamic mechanism design for budget constrained buyers, and we refer the reader to the survey \cite{bergemann2010dynamic} and references therein. There have also been recent developments for designing auctions in a setup called \textit{autobidding}, where advertisers simultaneously participate in parallel auction to maximize total value while subject to a coupled ROI constraint across all auctions (see e.g. \cite{aggarwal2019autobidding,deng2021towards,balseiro2021robust,deng2022fairness}). All aforementioned works focus on the static mechanism design problem, whereas in this paper we address the topic of designing repeated posted price mechanisms to sell to both budget and ROI constrained buyers.

\paragraph{Selling to strategic or learning buyers.}
\cite{kleinberg2003value} studies the scenario where the seller sells items through a repeated posted price mechanism to a single truthful buyer who simply accepts the price if her valuation is greater than the offered price. the work presents optimal algorithms in the settings where the buyer's valuations are fixed, stochastic and adversarial, respectively.
\cite{amin2013learning} also concerns 
selling through a posted price mechanism, but 
to a strategic buyer who may choose not to accept a price bellow her valuation (or accept a price above her valuation). 
The work presents learning algorithms in both the fixed valuation and stochastic valuation settings under the  assumption that discount their utilities over time. Other related works include \cite{golrezaei2020dynamic} which studies the dynamic pricing problem for repeated contextual second price auctions facing multiple strategic buyers. The work proposes learning algorithms that are robust to buyers' strategic behavior under various seller information structures and provides corresponding performance guarantees. \cite{golrezaei2019incentive} relaxes several assumptions for one of the settings in \cite{golrezaei2020dynamic}, and presents an algorithm with improved performance guarantees.
Finally, \cite{balseiro2019futility} considers the dynamic mechanism design problem against strategic buyers, and further identifies a class of problems in which the optimal mechanism is to simply repeat some static mechanism over time. The closes previous work to this paper is \cite{braverman2018selling}, where it studies the pricing problem against a single unconstrained quasi-linear buyer who adopts a certain class of learning algorithms, which they refer to as ``mean-based'' algorithms (e.g. Follow the Perturbed Leader algorithm and EXP3), the seller can extract the buyer's entire surplus; see \cite{deng2019prior} for an extension of this work. We remark that all works discussed here do not consider constrained buyers, and therefore this paper distinguishes itself by studying the pricing problem against buyers with both budget and ROI constraints, which further allows us to characterize special structures of seller revenue (see Section \ref{sec:reform}).

We refer readers to Appendix \ref{app:relatedwork} for an extended literature review.

\section{Preliminaries}
\label{sec:model}

\textbf{Notation.}
Let $\R_{+}$ be all non-negative  real numbers, and $\R_{++}$ be all strictly positive real numbers. For integer $N \in \N$, denote $[N] = \{1,2, \dots , N\}$ and $\Delta_{N} = \left\{\bm{p}\in [0,1]^{N}:\sum_{n \in [N]} p_{n} = 1\right\}$ be the $N$-dimensional probability simplex. Finally, denote $\norm{\cdot}$ as the Euclidean norm.

\textbf{Model setup: } Consider a seller repeatedly selling items to a buyer over $T$ periods through a posted price mechanism: in each period $t$, the seller posts  a price $d_{t}$ for the item to be sold, and the buyer makes a take it or leave it decision $z_{t}\in \{0,1\}$ based on her value $v_{t}$ of the item, where $z_{t} =1$ when the buyer takes the item at price $d_{t}$, and 0 otherwise.

We assume the seller commits to a finite price set $\mathcal{D} = \{D_{m}\}_{m\in[M]}$ where $1\geq D_{1} > \dots > D_{M} > 0$ from which she chooses the posted prices  $\{d_{t}\}_{t\in [T]}$, and we assume the the buyer's valuations are drawn independently each period from a distribution over $\bm{g} = (g_{1}\ldots g_{N}) \in \Delta_{N}$  ($g_{n} \in \R_{++}$ for all $n \in [N]$)
over a finite support $\mathcal{\V} = \{\V_{n}\}_{n\in[N]}$ where $1\geq \V_{1} > \dots > \V_{N} > 0$  such that $\prob(v_{t} = \V_{n}) = \valpe_{n}$ for any period $t \in [T]$.

\textbf{ROI and budget constrained buyers: }
The buyer aims to maximize total acquired value over $T$ periods, while subject to an ROI constraint with the target ROI of $\gamma\geq 1$ and a budget constraint with  budget rate $\rho \in (0,1)$.\footnote{Note that in the literature another common buyer objective is to optimize  linear utility that takes the form $\sum_{t\in [T]}(v_{t} - \alpha d_{t})z_{t}$ for some parameter $\alpha \geq 0$. We point out that all results in this paper can be extended easily to such linear objectives. } Mathematically, using the shorthand notation $\bm{d}_{1:T}$ for the sequence of prices $\{d_{t}\}_{t\in [T]}$, the buyer's hindsight optimization problem can be written as followed
\begin{align}
\label{eq:defOpt}
\begin{aligned}
 \opt(\bm{d}_{1:T}) =
\max_{\bm{z} \in [0,1]^{T}}  & \quad  \sum_{t\in[T]} \E\left[v_{t} z_{t} \right] \quad 
\textrm{s.t.} ~ &   \sum_{t\in [T]} \E\left[\left(v_{t} - \gamma d_{t}\right)z_{t}\right]\geq 0 \text{, and } \sum_{t\in [T]}  \E \left[ d_{t}z_{t} \right]\leq \rho T\,.
\end{aligned}
\end{align}
We remark that both budget and ROI constraints are studied in expectation. Such ``soft'' constraints are useful in practice due to the fact that real-world advertisers 
typically engage in many different online advertising campaigns, so it is reasonable to maintain these financial constraints in an average sense. We note that such soft financial constraints are also studied in mechanism design and online learning literature such as \cite{vaze2018online,golrezaei2018auction}.

We denote the optimal hindsight buyer decision sequence to Equation (\ref{eq:defOpt}) as $\{z_{t}^{*}(\bm{d}_{1:T})\}_{t\in [T]}$. When all prices are equal, i.e. $d_{t} = d$ for all $t$, we use the shorthand notation 
$\opt(d) $ and $\{z_{t}^{*}(d)\}_{t\in [T]}$. Note that optimal hindsight decisions $\{z_{t}^{*}(\bm{d}_{1:T})\}_{t\in [T]}$ may possibly be fractional, which can be implemented by randomization. 

The buyer's target ROI $\gamma$ and budget rate $\rho$ are private to the buyer and unknown to the seller. Also, both the seller and the buyer do not know the valuation distribution $\valp$.

\textbf{Seller's benchmark revenue and regret.}

The seller does not know the buyer's model primitives, namely the buyer's valuation distribution $\valp$, target ROI $\gamma$ and budget rate $\rho$. Furthermore, the seller only observes the buyer's decision $z_{t} \in \{0,1\}$, and does not observe buyer values. Under such information structure, we focus on non-anticipative seller pricing strategies that post prices based on historical data, i.e. in each period $t$, the decision $z_{t}$ can only depend on $\{(d_{\tau},z_{\tau})\}_{\tau\in [t-1]}$. We evaluate the performance of any sequence of pricing decision $\{d_{t}\}_{t \in [T]} \in \mathcal{D}^{T}$ by benchmarking its realized revenue, namely $\sum_{t\in[T]}d_{t}z_{t}$, to the maximum  revenue that could have been obtained if (i) the seller had set a fixed price over all $T$ periods and (ii) the buyer makes optimal hindsight decisions given her ROI and budget constraints. Mathematically, assume the seller fixes price $d \in \mathcal{D}$ over all $T$ periods, and the buyer's optimal decisions are $\{z_{t}^{*}(d)\}_{t\in[T]}$. Then, the seller's benchmark revenue is $\max_{d\in \mathcal{D}} \E[ d \sum_{t\in[T]} z_{t}^{*}(d)]$ and her regret
 can be defined as follows
\begin{align}
\label{eq:sellerregret}
   \sellreg= \max_{d\in \mathcal{D}}  d \sum_{t\in[T]} \E\left[z_{t}^{*}(d)\right]- \sum_{t\in [T]}\E\left[d_{t}z_{t}\right]\,,
\end{align}
where the expectation is taken w.r.t. $\{v_{t}\}_{t\in[T]}$  and randomness in the buyer's strategy (and thus randomness in $\{z_{t}^{*}(d)\}_{t\in[T]}$).

\begin{remark}
 The seller's regret resembles that of an $M$-arm multi-arm bandit (MAB) problem (see \cite{lattimore2020bandit} for a detailed introduction), where we can view each price $d \in \mathcal{D}$ as an arm and $d \cdot z_{t}$ as the reward by pulling arm $m$. Nevertheless, we point out that our setting is more complex than the vanilla MAB setting  as the seller's reward $d\cdot z_{t}$ for setting price $d$ during period $t$  not only depends on the seller algorithm which determines prices based on historical observations , but also the buyer's algorithm to optimize Equation (\ref{eq:defOpt}).
\end{remark}

We point out that the benchmark revenue in the seller's regret of Equation (\ref{eq:sellerregret}) is strong, as it represents the maximum seller revenue when both the buyer and seller have complete information and act optimally, i.e. if the seller knows everything about the buyer, in each period she myopically posts a revenue-maximizing price under best buyer response. 

Our goal is to develop a seller pricing algorithm to minimize regret when facing a buyer who optimizes Equation (\ref{eq:defOpt}) via running some online learning algorithm (to be discussed in later sections).

\section{Seller's Revenue and Regret}
\label{sec:reform}

In this section, we present a reformulation for the  seller's benchmark revenue in the seller's regret (Equation (\ref{eq:sellerregret})), and then further characterize  special structures of this reformulation which will later motivate the design of our pricing algorithm.
\subsection{Reformulating the seller's benchmark  revenue}
Recall the seller's benchmark revenue in Equation (\ref{eq:sellerregret}) which depends on the buyer's best response decision sequence over the entire horizon $T$ under a fixed price. To present our reformulation of this benchmark, we first show that for any price $d$, although the buyer's hindsight optimal decisions $\{z_{t}^{*}(d)\}_{t\in [T]}$ may seemingly be interdependent across periods due to the coupling of budget and ROI constraints over the entire horizon, the optimal buyer decision in each period $t$ simply requires the buyer to myopically make a decision $z_{t}$ that maximizes single-period expected value under ``single-period budget and ROI constraints'', namely $\E\left[\left(v_{t} - \gamma d\right)z_{t}\right]\geq 0$ and $ \E \left[d z_{t} \right]\leq \rho$.

Formally, consider the following myopic buyer optimization problem: for a given posted price $d$, let $\bm{x}\in [0,1]^{N}$ be some vector whose $n$th entry $x_{n}$ denotes the probability of accepting the price when the buyer's realized value is $V_{n}$. Then, the myopic buyer optimization problem can be written as Equation (\ref{eq:pricing:buyerutil}) whose optimal solution is shown in the following Lemma \ref{lem:thresholdOptSol} (see proof in Appendix \ref{pf:lem:thresholdOptSol}).
\begin{align}
\begin{aligned}
\label{eq:pricing:buyerutil}
\util(d) =  \max_{\bm{x}\in [0,1]^{N}}  \sum_{n\in[N]} \valpe_{n}\V_{n} x_{n} ~
 \quad \textrm{s.t.} &\sum_{n\in[N]} \valpe_{n}\left(\V_{n} - \gamma d\right)x_{n}\geq 0, ~ \text{ and } ~d\sum_{n\in[N]} \valpe_{n} x_{n}\leq \rho \,.
 \end{aligned}
\end{align}
\begin{lemma}
\label{lem:thresholdOptSol}
For any price $d$, the optimal solution to Equation \eqref{eq:pricing:buyerutil} is unique, and takes the form $\bm{x}_{d} = (1,1, \ldots q, 0,0\ldots 0) \in [0,1]^{N}$ for some $q \in (0,1]$.
\end{lemma}

The special form of the optimal solution of Equation (\ref{eq:pricing:buyerutil}) suggests a buyer strategy that accepts all items when buyer value is beyond a certain threshold. We formalize such a strategy in the following definition.
\begin{definition}[Threshold strategy]
\label{def:threshstrat}
For a given vector $\bm{x}$ that takes the form $\bm{x} = (1,1, \ldots q, 0,0\ldots 0) \in [0,1]^{N}$
where $q \in (0,1]$ is the $n$th entry, we say a buyer adopts a threshold strategy w.r.t. $\bm{x}$ if, regardless of the posted price, she accepts the item when her value is $V_{1} \ldots V_{n-1}$; accepts w.p. $q$ when her value is $V_{n}$; and rejects the item otherwise.
\end{definition}
As an example, for $N = 4$ and some vector $\bm{x} = (1,1,0.3, 0)$, the buyer adopts a threshold strategy w.r.t. $\bm{x}$ if she accepts the item when her value is $V_{1}$ or $V_{2}$; accepts w.p. 0.3 when her value is $V_{3}$, and rejects when her value is $V_{4}$.

With Lemma \ref{lem:thresholdOptSol} and the notion of threshold strategies in Definition \ref{def:threshstrat}, we can formally define the buyer's best response to a given price $d$:
\begin{definition}[Buyer best response]
\label{def:br}
We say a buyer best responds to a posted price $d$ if she adopts a threshold strategy w.r.t. $\bm{x}_{d} \in [0,1]^{N}$ which is the optimal solution to $U(d)$ (see Lemma \ref{lem:thresholdOptSol}).
\end{definition}
Note that in order for to best respond to a posted price, the buyer would need to know the value distribution $\bm{g}$.

Our main result for this subsection is illustrated in the following theorem, which states that buyer's hindsight optimal decision sequence $\{z_{t}^{*}(d)\}_{t\in [T]}$ for $\opt(d)$ in Equation (\ref{eq:defOpt}) simply requires the buyer to independently best respond to the posted price in each period.
\begin{proposition}
\label{prop:reformsellregret}
Given a single price $d$ posted across all periods,  the optimal buyer decision in each period $t$ is to best respond accorind a threshold strategy w.r.t. $\bm{x}_{d}$ (Definition \ref{def:br}), where $\bm{x}_{d} \in [0,1]^{N}$ is the unique optimal threshold solution to $\util(d)$ (Equation (\ref{eq:pricing:buyerutil})). Further, the best response buyer decision induces a per-period expected revenue
\begin{align}
\label{eq:revenue}
\rev(d) \defeq d\sum_{n\in [N]} \valpe_{n} x_{d,n}\,.
\end{align}
Then, $ \max_{d\in \mathcal{D}} \E\left[ d \sum_{t\in[T]} z_{t}^{*}(d)\right] = T\max_{d\in \mathcal{D}}\rev(d)$
and thus $ \sellreg = T\max_{d\in \mathcal{D}}\rev(d) - \sum_{t\in [T]}\E\left[d_{t}z_{t}\right]$.
\end{proposition}
We refer readers to the proof in Appendix \ref{pf:prop:reformsellregret}.

\subsection{Structure of Benchmark Seller Revenue}
Here, we present a special  underlying  structure of the seller revenue $\rev(d)$ defined in Equation (\ref{eq:revenue}) which will  motivate our pricing algorithm in the next Section \ref{sec:pricingalgo}. The goal of this section is to develop efficient ways to identify $\arg\max_{d\in\mathcal{D}}\rev(d)$ by avoiding exploring each possible price in $\mathcal{D}$. In the rest of the paper, we make the following assumption to rule out trivial problem instances (e.g. cases when the optimal solution $\bm{x}_{d}$ corresponding to some $d\in \mathcal {D}$ has all 0 entries or when one of the constraints are redundant):
\begin{assumption}
\label{assum:price:nontriv}
For any $d \in \mathcal{D}$, assume $V_{N} - \gamma d < 0< \V_{1} - \gamma d $ and $\sum_{n \in [N]}(\V_{n} - \gamma d)\valpe_{n} \neq 0$. Furthermore, assume $D_{M} < \rho < D_{1}$.
\end{assumption}

 To begin with, we  categorize all prices  $d\in \mathcal{D}$ based on whether constraints are binding under the corresponding optimal solution $\bm{x}_{d}$. 
\begin{definition}For price $d$ let $\bm{x}_{d}$ be the optimal threshold-based solution to $\util(d)$ in Equation (\ref{eq:pricing:buyerutil}). Then we call $d$
\begin{itemize}
    \item \textbf{Non-binding}, if under $\bm{x}_{d}$, both constraints are non binding, i.e., $d\sum_{n \in [N]}\valpe_{n}x_{d,n} < \rho$ and 
    $\sum_{n\in[N]}\left(\V_{n} - \gamma d\right)\valpe_{n}x_{d,n} > 0$;
    \item \textbf{Budget binding} if under $\bm{x}_{d}$, the budget constraints is binding, i.e.
    $d\sum_{n \in [N]}\valpe_{n}x_{d,n} = \rho$ and $\sum_{n \in [N]}(\V_{n} - \gamma d)\valpe_{n}x_{d,n} > 0$;
    \item \textbf{ROI binding}  if under $\bm{x}_{d}$, the ROI constraint is binding, i.e. $\sum_{n \in [N]}(\V_{k} - \gamma d)\valpe_{n}x_{d,n} = 0$ and $d\sum_{n \in [N]}\valpe_{n}x_{d,n}\leq \rho$.
\end{itemize}
\end{definition}
It is apparent that any price $d\in \mathcal{D}$ must belong to at least one of these categories. Also, if a price is non-binding, it cannot be budget binding or ROI binding. 

Our main result of this subsection is the following Theorem \ref{thm:pricing:optd}, which states that as we traverse $\mathcal{D}$ in increasing price order, prices are first non-binding and the revenue $\rev(d)$ increases in $d$; then prices become budget binding, where revenue remains constant at $\rev(d) = \rho$; finally prices become ROI binding, where $\rev(d)$ decreases in $d$. The proof can be found in Appendix \ref{pf:thm:pricing:optd}.
\begin{theorem}[Bell-shaped Structure of the Revenue Function]
\label{thm:pricing:optd}
Suppose that Assumption $\ref{assum:price:nontriv}$ holds. Then, the following hold  
\begin{enumerate}
    \item For any non-binding prices $d,\widetilde{d}$, if $d< \widetilde{d}$ then $\rev(d) < \rev(\widetilde{d})$. 
    \item If $d$ is budget binding, any price $\widetilde{d} > d$ cannot be non-binding, which means $\widetilde{d}$ is budget binding or ROI binding.
    \item If $d$ is ROI binding, then any $\widetilde{d} > d$  must also be ROI binding. Furthermore, $\rev(d) > \rev(\widetilde{d})$. 
\end{enumerate}
\end{theorem}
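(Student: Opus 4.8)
The plan is to analyze the three regimes separately, exploiting the closed-form threshold structure of $\bm{x}_{d}$ from Theorem \ref{lem:thresholdOptSol}. Since $\alpha=0$ and all $\V^{n}>0$ we have $\kappa_{\alpha}=N$, so the $\alpha$-truncation is inactive and $\bm{x}_{d}=\min\{\bm{x}^{\ROI}(d),\bm{x}^{\bud}(d)\}$, where I write $\bm{x}^{\ROI}(d),\bm{x}^{\bud}(d)$ for the ROI-only and budget-only optimal threshold vectors at price $d$. For Part~1, I would first argue that at a non-binding price the optimum is the all-ones vector: the objective $\sum_{n}\valpe^{n}\V^{n}x^{n}$ is strictly increasing in each coordinate (every weight $\valpe^{n}\V^{n}>0$), so if both constraints were strictly slack at $\bm{x}_{d}$ while some $x^{n}<1$, a small increase of that coordinate keeps both constraints satisfied and strictly improves the objective, contradicting optimality. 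Hence $\bm{x}_{d}=(1,\dots,1)$ and $\rev(d)=d\sum_{n}\valpe^{n}=d$, which is strictly increasing in $d$; comparing two non-binding prices $d<\widetilde d$ gives $\rev(d)=d<\widetilde d=\rev(\widetilde d)$.

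For Part~2, I would observe that budget binding forces $\rho=d\sum_{n}\valpe^{n}x_{d}^{n}\le d$ (using $\sum_{n}\valpe^{n}x_{d}^{n}\le 1$), so $d\ge\rho$ and thus $\widetilde d>d\ge\rho$. A non-binding price, however, requires the all-ones vector to be budget-feasible with slack, i.e. $\widetilde d\sum_{n}\valpe^{n}=\widetilde d<\rho$, which is impossible; hence $\widetilde d$ cannot be non-binding.

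Part~3 is the crux. The key reduction is that when $d$ is ROI binding the ROI constraint is tight, so $\sum_{n}\valpe^{n}\V^{n}x_{d}^{n}=\gamma d\sum_{n}\valpe^{n}x_{d}^{n}$ and therefore $\rev(d)=d\sum_{n}\valpe^{n}x_{d}^{n}=\tfrac1\gamma\,\util^{\ROI}(d)$, where $\util^{\ROI}(d):=\max\{\sum_{n}\valpe^{n}\V^{n}x^{n}:\sum_{n}(\V^{n}-\gamma d)\valpe^{n}x^{n}\ge 0,\ \bm{x}\in[0,1]^{N}\}$ is the value of the ROI-only relaxation, attained at $\bm{x}^{\ROI}(d)$ (here I use that ROI binding means $\bm{x}^{\ROI}(d)$ is budget-feasible, i.e. $\bm{x}^{\ROI}(d)\preceq\bm{x}^{\bud}(d)$, so $\bm{x}_{d}=\bm{x}^{\ROI}(d)$). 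I would then show $\util^{\ROI}(d)$ is non-increasing by feasibility transfer: for $d<\widetilde d$ the optimizer $\bm{x}^{\ROI}(\widetilde d)$ remains ROI-feasible at $d$, since $\sum_{n}(\V^{n}-\gamma d)\valpe^{n}x^{n}=\sum_{n}(\V^{n}-\gamma\widetilde d)\valpe^{n}x^{n}+\gamma(\widetilde d-d)\sum_{n}\valpe^{n}x^{n}\ge 0$, so $\util^{\ROI}(d)\ge\util^{\ROI}(\widetilde d)$. This monotonicity transfers to the expenditure $E^{\ROI}(d):=d\sum_{n}\valpe^{n}x^{\ROI,n}(d)=\util^{\ROI}(d)/\gamma$. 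Since ROI binding is characterized by $E^{\ROI}(d)\le\rho$, non-increasingness of $E^{\ROI}$ gives $E^{\ROI}(\widetilde d)\le E^{\ROI}(d)\le\rho$, so $\widetilde d$ is again ROI binding, proving the category-preservation claim.

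Finally, for the strict revenue decrease I would upgrade the monotonicity to a strict inequality. The feasibility transfer shows $\bm{x}^{\ROI}(\widetilde d)$ has strictly positive ROI slack at price $d$, namely the extra term $\gamma(\widetilde d-d)\sum_{n}\valpe^{n}x^{\ROI,n}(\widetilde d)>0$ (the winning mass is positive because $\V^{1}-\gamma\widetilde d>0$ by Assumption \ref{assum:price:nontriv} forces the first coordinate of $\bm{x}^{\ROI}(\widetilde d)$ to equal $1$), whereas $\bm{x}^{\ROI}(d)$ achieves ROI exactly $0$ at $d$. The two threshold vectors therefore cannot coincide; since threshold vectors are totally ordered (Definition \ref{def:thresholdvect}) and the value is strictly increasing along this order, $\bm{x}^{\ROI}(d)\succ\bm{x}^{\ROI}(\widetilde d)$ and hence $\util^{\ROI}(d)>\util^{\ROI}(\widetilde d)$. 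Combined with $\rev=\util^{\ROI}/\gamma$ on the ROI-binding range, this yields $\rev(d)>\rev(\widetilde d)$. I expect the main obstacle to be precisely this Part~3: making the equivalence between ``ROI binding'' and budget-feasibility of $\bm{x}^{\ROI}(d)$ fully rigorous, including the overlap case where a price is simultaneously budget and ROI binding and the degenerate case where the buyer wins no mass, and then converting the monotonicity of $\util^{\ROI}$ into a clean strict revenue comparison.
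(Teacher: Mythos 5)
Your proposal is correct in substance. Parts 1 and 2 essentially coincide with the paper's argument: the paper proves your all-ones claim as a standalone fact (Fact \ref{fact:nonbinding}) via the same local-improvement perturbation, and its Part 2 runs a monotone revenue chain $\rho = \rev(d) \le d\sum_n \valpe^n x_{\widetilde d}^n < \widetilde d \sum_n \valpe^n x_{\widetilde d}^n < \rho$ where you use the normalization $\sum_n \valpe^n = 1$ to get $d \ge \rho$ directly — equivalent one-liners. Your Part 3, however, is a genuinely different route. The paper never detours through the ROI-only relaxation: it compares the two-constraint optimizers directly, showing $\bm{x}_{\widetilde d} \preceq \bm{x}_d$ by contradiction via the ordering property for threshold vectors (Lemma \ref{lem:consttrans}(ii)) applied to the coefficients $(\V^n - \gamma d)\valpe^n$ — if $\bm{x}_{\widetilde d}\succeq \bm{x}_d$ then $0 = \sum_n(\V^n-\gamma d)\valpe^n x_d^n \geq \sum_n(\V^n-\gamma d)\valpe^n x_{\widetilde d}^n > \sum_n(\V^n-\gamma \widetilde d)\valpe^n x_{\widetilde d}^n$, contradicting feasibility — and then reads off $\rho \geq \rev(d) = \frac{1}{\gamma}\sum_n \V^n\valpe^n x_d^n > \frac{1}{\gamma}\sum_n \V^n\valpe^n x_{\widetilde d}^n \geq \rev(\widetilde d)$, ruling out $\widetilde d$ non-binding via the all-ones fact. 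Your route buys a cleaner conceptual statement (on the ROI-binding range, revenue is the decreasing relaxation value $\util^{\ROI}/\gamma$) with monotonicity isolated in a reusable one-line feasibility-transfer; the paper's route buys economy, since comparing $\bm{x}_d$ and $\bm{x}_{\widetilde d}$ directly sidesteps the equivalence between ROI binding and budget-feasibility of $\bm{x}^{\ROI}(d)$ that you correctly identify as your main rigor burden.

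One caution on the step you flagged: as stated, the characterization ``ROI binding iff $E^{\ROI}(d)\le\rho$'' is false in general — when $\sum_n(\V^n-\gamma d)\valpe^n>0$ the ROI-only optimizer is the all-ones vector with strict ROI slack, so a price can satisfy $E^{\ROI}(d)\le\rho$ while being non-binding, and then $\util^{\ROI}(d)\neq\gamma E^{\ROI}(d)$ as well. The patch is the observation your argument needs anyway: $d$ ROI binding plus Assumption \ref{assum:price:nontriv} (which excludes $\sum_n(\V^n-\gamma d)\valpe^n = 0$) forces $\sum_n(\V^n-\gamma d)\valpe^n<0$, and this full-sum negativity propagates to every $\widetilde d>d$; on that range the ROI-only optimizer always has exactly tight ROI, the identity $\util^{\ROI}=\gamma E^{\ROI}$ holds at both prices, and budget-feasibility of $\bm{x}^{\ROI}(\widetilde d)$ (using that expenditure is increasing along the threshold order) yields $\bm{x}_{\widetilde d}=\bm{x}^{\ROI}(\widetilde d)$ with binding ROI. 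With that one-line addition your argument is complete, including the strict decrease, where your slack-versus-tight distinction at price $d$ correctly rules out $\bm{x}^{\ROI}(d)=\bm{x}^{\ROI}(\widetilde d)$ and, combined with the already-established weak monotonicity, forces $\bm{x}^{\ROI}(d)\succ\bm{x}^{\ROI}(\widetilde d)$.
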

We provide an illustration of Theorem \ref{thm:pricing:optd} in Figure \ref{fig:transition} that depicts the ``non-binding $\rightarrow$ budget binding $\rightarrow$ ROI binding''  transition phenomenon, as well as a corresponding revenue ``increase $\rightarrow$ plateau $\rightarrow$ decrease'', as we traverse prices in increasing order. We note that for specific model primitives $\valp ,\gamma, \rho$, there may exist no budget binding prices
(as shown in right subfigure in Figure \ref{fig:transition}), meaning that there are scenarios in which it is impossible for the buyer to extract the entire buyer budget. Nevertheless, this transition phenomena suggests that we can efficiently identify the maximizing revenue $\arg\max_{d\in\mathcal{D}}\rev(d)$ by utilizing a simple binary search approach. Hence, we utilize this structure of $\rev(d)$ to motivate our pricing algorithm. 
\begin{figure}[!ht]
\includegraphics[width = \linewidth]{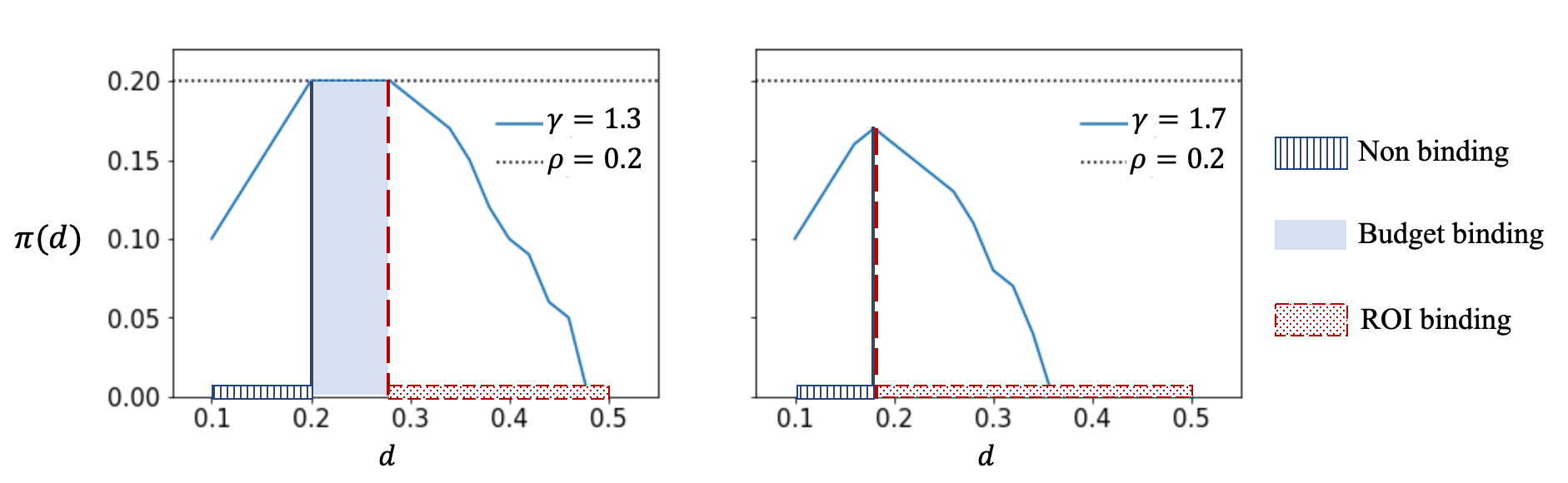}
\centering
\caption{\footnotesize \textbf{Seller revenue function bell-shape structure.} Model primitives: number of unique buyer valuations $N = 6$, valuation set $\mathcal{V} = (0.6,0.5,0.4,0.3,0.2,0.1)$, valuation distribution $\valp = (0.1,0.1,0.2,0.1,0.2,0.3)$, seller price set $\mathcal{D} = (0.5,0.48 \dots 0.1)$, buyer budget rate $\rho = 0.2$. The left and right subfigures correspond to target ROI $\gamma = 1.3$ and $1.7$ respectively. In both cases, prices transition from non-binding to budget binding, and finally to ROI binnding. Revenue $\rev(d)$ increases as in $d$ when prices are non-binding, decreases in $d$ when prices are ROI binding, and remains at $\rho$ when prices are budget binding. Note that when $\gamma = 1.7$, there are no budget binding prices.}  
\label{fig:transition}
\end{figure}

\section{Pricing Algorithm against an ROI and Budget Constrained Buyer}
\label{sec:pricingalgo}
The main challenge the seller faces is her lack of knowledge on the buyer's model primitives, namely the buyer's valuation distribution $\valp$, target ROI $\gamma$ and budget rate $\rho$. Furthermore, the seller has limited information feedback as she only observes whether the buyer accepted the price or not, i.e., the seller only observes the outcome $z_{t} \in \{0,1\}$. This lack of information makes it very difficult for the seller to estimate the buyer's model primitives. Nevertheless, we propose a simple pricing algorithm that bypasses this lack of knowledge via exploiting the price transition phenomenon as characterized in Theorem \ref{thm:pricing:optd} and Figure \ref{fig:transition}. We demonstrate later in subsection \ref{subsec:sellerregret} that this algorithm achieves good performance when facing a general class of algorithms that is adaptive to nonstationary environments.

Our proposed pricing algorithm consists of an exploration phase and an exploitation phase.  During the exploration phase, the algorithm searches for a revenue maximizing price $\arg\max_{d\in \mathcal{D}}\rev(d)$ through an episodic structure: the seller initiates the first episode  $\mathcal{E}_{1}$, and fixes the price chosen in this episode $D_{1}$ for $E$ consecutive periods. At the end of the episode (i.e. after $E$ periods since the beginning of the episode), the seller records the average per-period revenue $\Hat{\rev}(D_{1}) = \frac{D_{1}}{E}\sum_{t\in \mathcal{E}_{1}}z_{t}$, where $z_{t} \in \{0,1\}$ indicates whether the buyer takes the price at time $t \in \mathcal{E}_{1}$. The process then repeats as the seller moves on to episodes $\mathcal{E}_{2}, \dots $ This exploration phase eventually terminates when the seller has explored enough prices.
The seller's pricing decision in each episode is governed by a binary search procedure over the price set $\mathcal{D}$, such that
every price is chosen at most once across all episodes, and the exploration phase will have $\mathcal{O}(\log(M))$ episodes. Our pricing algorithm is detailed in 
Algorithm \ref{alg:pricing}. 

We note that our proposed algorithm does not try to learn the buyer's model primitives.
We further point out that 
such a binary-search approach is a natural choice to identify revenue-optimal prices in the simplest monopolistic pricing setting under a typical unimodal 
assumption, \footnote{In monopolistic pricing, the revenue-optimal price $p^{*}$ is charachterized by $d^{*}  = \arg\max_{d} dF(d)$, where $F$ is the cdf of buyer valuations. A typical assumption is such that the function $dF(d)$ is unimodal.} and one may wonder whether this approach can have good performances against a much more complex setting where the buyer is ROI and budget constrained and aims to learn her optimal bidding strategy. Surprisingly, in the next section we are in fact able to show this simple approach achieves good performances against buyers who are adaptive to price changes.

\begin{algorithm}[h]
    \centering
    \caption{Episodic Binary Search }\label{alg:pricing}
    \footnotesize
    \begin{algorithmic}[1]
    \Require Exploration episode length $E$.
    \State Initialize iteration index $\iter= 1$.
    {\Statex{\textbf{Exploration episodes}}}:
    \State Set $D_{1}$ for $E$ consecutive periods, and record per-period revenue $\Hat{\rev}\left(\D_{1}\right)$. Then 
	set $\D_{M}$ for $E$ consecutive periods, and record average per-period revenue $\Hat{\rev}\left(\D_{M}\right)$.
	\State Set $\term\leftarrow \arg\max_{m\in \{1,M\}} \Hat{\rev}\left(\D_{m}\right)$\; 
	$\lef= 1$, $\rig = M$, $\med = \lfloor\frac{\lef+\rig}{2}\rfloor$.
	\While{$\lef < \rig$}
	\State $\iter \leftarrow \iter+1$.
	\If{per-period revenue $\Hat{\rev}\left(\D_{k}\right)$ is not recorded for $k = \med,\med+1$}
	\State{Set
		price $\D_{k}$ for $E$ consecutive periods and record per-period revenue $\Hat{\rev}\left(\D_{k}\right)$ for $k = \med, \med+1$}
	 \EndIf
	 \If{$\Hat{\rev}\left(\D_{\med}\right)  < \Hat{\rev}\left(\D_{\med+1}\right)$} 
	 \State{Set $\term\leftarrow \arg\max_{m\in \{\term,\med+1\}} \Hat{\rev}\left(\D_{m}\right)$, $\lef \leftarrow \med+1$, $\med \leftarrow \lfloor\frac{\lef+\rig}{2}\rfloor$}
	 \Else 
	 \State Set $\term\leftarrow \arg\max_{m\in \{\term,\med\}} \Hat{\rev}\left(\D_{m}\right)$, $\rig \leftarrow \med-1$, $\med \leftarrow \lfloor\frac{\lef+\rig}{2}\rfloor$
	 \EndIf
	\EndWhile

	{\Statex{\textbf{Exploitation episode}}}:
	\State{Set price $\D_{\term}$ for the remaining periods.}
    \end{algorithmic}
\end{algorithm}
For notation convenience, we denote $\mathcal{E}_{h}$ as the collection of periods in episode $h$. Finally, we remark that the exploration episode length $E$ is deterministic and depends on the total number of periods $T$.

\subsection{Regret Analysis of Pricing Algorithm}
\label{subsec:sellerregret}
In this section, we provide theoretical guarantees for our proposed pricing algorithm against  buyer algorithms whose induced decisions approximate single-round best responses (see Definition \ref{def:br}) in the average sense. We formally define algorithms with such properties as follows:
\begin{definition}[$\xi$-Adaptive Buyer Algorithms]
\label{def:adapt}
We say a buyer algorithm is $\xi$-adaptive to seller algorithm \ref{alg:pricing} if the induced decisions $\{z_{t}\}_{t\in [T]}$ in any exploration or exploitation episode $\mathcal{E}_{h}$ satisfies
\begin{align}
\label{eq:adapterror}
    \left| \frac{D_{h}}{|\mathcal{E}_{h}|} \sum_{t\in \mathcal{E}_{h}}z_{t} - \rev(D_{h})\right| \leq \frac{\phi (|\mathcal{E}_{h}|)}{|\mathcal{E}_{h}|}
\end{align}
with probability (w.p.) at least $1-1/T$ for some increasing error function $\phi: \R_{+} \to \R_{+}$ and $\phi (x) =  \mathcal{O}(x^{1-\xi}) $. Here $D_{h}$ is the price set in episode $h$, and $\rev(\cdot)$ is the per-period revenue function under buyer best response defined in Equation (\ref{eq:revenue}).
\end{definition}
The term $\left|\frac{D_{h}}{|\mathcal{E}_{h}|}\sum_{t\in \mathcal{E}_{h}}z_{t} - \rev(D_{h})\right|$  is the seller's average revenue loss, relative to the revenue from optimal buyers, over a certain period with a  fixed price $D_{h}$. Alternatively, the term can  be viewed as the buyer's deviation from best responding since $\frac{\rev(D_{h})}{D_{h}} = \sum_{n \in [N]} \valpe_{n}x_{D_{h},n}$ is the optimal probability with which the buyer should take price $D_{h}$.

The main result of this subsection is presented in Theorem \ref{thm:episodicprice}, which characterizes the performance of our pricing algorithm against any $\xi$-adaptive buyer algorithm. The proof of Theorem \ref{thm:episodicprice} can be found in Appendix \ref{proof:thm:episodicprice}.
\begin{theorem}[Pricing against $\xi$-adaptive buyers]
\label{thm:episodicprice}
{Consider the seller runs Algorithm \ref{alg:pricing} against an $\xi$-adaptive buyer
algorithm (Definition \ref{def:adapt}).} Fix $\epsilon \in (0,\xi)$ independent of $T$. Then by setting exploration episode length $E = T^{1-\xi + \epsilon}$ in seller algorithm \ref{alg:pricing}, for large enough $T$ under Assumption \ref{assum:price:nontriv} the seller's regret is bounded as
\begin{align}
\begin{aligned}
\label{eq:sellregret}
   \sellreg \leq & 2\left(\left\lfloor\log_{2}(M) \right\rfloor + 1\right) \cdot T^{1-\xi + \epsilon}  + \phi\left(T\right)+ \left(\left\lfloor\log_{2}(M)\right\rfloor + 1\right)^{2}/2\,,
  \end{aligned}
\end{align}
{where $\phi$ is the error function defined Equation (\ref{eq:adapterror}).}
\end{theorem} 

The first term $ T^{1-\xi + \epsilon}$ in the seller's regret (see Equation (\ref{eq:sellerregret}))  characterizes the number of periods required for the buyer's algorithm to  approximate the best-responding decisions in each episode facing a fixed price;  the second term $ \phi\left(T\right) $  represents the buyer's deviation from the best response. Finally, we point out that although in Theorem \ref{eq:sellregret} we set the exploration episode length to be $E = T^{1-\xi + \epsilon}$, the seller does not need to know the exact value of $\xi$ as a lower bound would be sufficient: if the seller knows some lower bound for $\xi$, say   $\xi' < \xi$, she can set $E = T^{1-\xi'}$, and the final seller regret would become $\sellreg \leq 2\left(\left\lfloor\log_{2}(M) \right\rfloor + 1\right) \cdot T^{1-\xi'}  + \phi\left(T\right) + \left(\left\lfloor\log_{2}(M)\right\rfloor + 1\right)^{2}/2$ for large enough $T$.

Another interesting observation for the seller regret is that its dependence on the price set dimension $M$ is logarithmic, meaning that our Algorithm \ref{alg:pricing} is  robust  w.r.t. the size of the seller's decision set. In fact, later in Section \ref{sec:adddisc}, we discuss that this nice logarithmic dependence on $M$ allows us to easily handle continuous price sets without causing decay in seller performance by using a simple discretization approach.

\section{Example of Adaptive and Buyer-regret minimizing Algorithms}
\label{sec:buyeralgo}
In this section, we present simple examples of buyer algorithms that are  adaptive in the sense of Definition \ref{def:adapt}, and also aim to satisfy  budget and ROI constraints (Equation (\ref{eq:defOpt})) while attaining low buyer regret, where the regret of the buyer is defined as
\begin{align}
\label{eq:buyerregret}
  \buyreg= \opt(\bm{d}_{1:T}) - \sum_{t\in [T]}\E\left[v_{t}z_{t}\right] \,.
\end{align}
Here $\{z_{t}\}_{t\in[T]}$ is the sequence of buyer binary decisions produced by the buyer algorithm. Also recall $\opt$ is the buyer's optimal hindsight total value described in Equation (\ref{eq:defOpt}). In the following subsections, we consider a clairvoyant buyer who best responds in each period as well as a buyer who possess machine-learned (ML) advice with which she uses to make decisions. We then further characterize seller regret of our proposed Algorithm \ref{alg:pricing} against such buyers.

\subsection{Best-responding buyer}
\label{subsec:br}
As a warm-up buyer example, we first consider a clairvoyant buyer who knows her value distribution $\bm{g}$, which means the buyer has nothing to learn from the data and thus can best respond in the sense of Definition \ref{def:br} during each period to maximize value under both budget and ROI constraints (Equation (\ref{eq:defOpt}). We show in the following lemma that best responding is adaptive (see proof in Appendix \ref{pf:lem:bestRespondAdapt}).
\begin{lemma}[Best-responding is 1/2-adaptive]
\label{lem:bestRespondAdapt}
There exists some $T_{0} \in \N$ such that for all $T > T_{0}$, best responding is $\frac{1}{2}$-adaptive (Definition \ref{def:adapt}). 
\end{lemma}
Combining Lemma \ref{lem:bestRespondAdapt} and Theorem \ref{thm:episodicprice}, we present the regret of Algorithm \ref{alg:pricing} against a best responding buyer in the following theorem whose proof can be found in Appendix \ref{pf:corr:bestRespondBuyer}
\begin{theorem}[Seller's regret against best responding buyer]
\label{corr:bestRespondBuyer}
 Assume the buyer always best responds, then for a fixed $\epsilon \in (0,\frac{1}{2})$ independent of  $T$, if  the seller sets prices with episode length $E = T^{\frac{1}{2}+\epsilon}$
using Algorithm \ref{alg:pricing}, then for large enough $T$, the seller's regret is bounded as $\text{Reg}_{\text{sell}} \leq \Theta(T^{\frac{1}{2}+\epsilon})$. On the other hand, the buyer also incurs $\Theta(T^{\frac{1}{2}+\epsilon})$ regret, and both budget and ROI constraints are satisfied. 
\end{theorem}
In this clairvoyant buyer setting, since the buyer is not learning and always best responds, the $T^{\frac{1}{2}}$ constituent in the seller regret is due to learning error from the seller. In the next section, we introduce a buyer who is non-clairvoyant and also constantly learns how to respond, and further discuss how buyer and seller learning errors simultaneously impact seller regret.

\subsection{Buyer with machine-learned (ML) advice}
In a real world scenario, buyers typically do not know their value distribution; e.g. buyers may be unaware of the likelihood of conversion of their ad impressions. However, the emergence of data-driven tools for online advertising platforms have provided buyers with additional analytics, or so-called ML advice, to help buyers estimate ad conversion. In this subsection, we consider a buyer who possesses ML advice in the form of distribution estimates of $\bm{g}$ with which she uses to approximate best responses against posted prices. Formally, we characterize such ML-advice-driven buyer responses as followed:

\begin{definition}[Approximate best response with ML advice.] 
\label{def:approxbr}
Assume in each period $t$, the buyer obtains ML advice $\Hat{\bm{g}}_{t}\in \Delta_{N}$ that only depends on historical data $\{v_{\tau}\}_{\tau\in[t]}$ s.t. $\norm{\Hat{\bm{g}}_{t} - \bm{g}} < \ell_{t}$ where $\ell_{t}$ is some estimation error. Then, the buyer solves for the optimal solution $\Hat{\bm{x}}_{t}$ in Equation (\ref{eq:pricing:buyerutil}) via replacing the true distribution $\bm{g}$ with $\Hat{\bm{g}}_{t}$, and then adopts a threshold strategy w.r.t. $\Hat{\bm{x}}_{t}$ (see Definition \ref{def:threshstrat}).
\end{definition}
We remark that ML advice in the form of distributional estimates is very common. For model-based approaches, ML algorithms assume distributions take a certain parametric form and then uses data to estimate unknown distribution parameters; see e.g. \cite{eliason1993maximum} for an intro on maximum likelihood estimation. For more general non-parametric approaches, ML advice concerns using empirical estimates (or so-called histogram estimates), which we will later discuss in Theorem \ref{thm:sellregretMLee}.

The following lemma relates ML advice driven approximate responses to our notion of buyer adaptivity in Definition \ref{def:adapt}, with which we are able to quantify seller regret in light of Theorem \ref{thm:episodicprice}. The detailed proof can be found in Appendix \ref{pf:thm:sellregretML}

\begin{theorem}[Seller regret against approximate best responding buyer with ML advice]
\label{thm:sellregretML}
Assume the buyer approximate best responds with ML advice (Definition \ref{def:approxbr}) and there exists some $\tote \in (0,1)$ s.t. in each exploration or exploitation episode $h$ of Algorithm \ref{alg:pricing} the estimation errors, denoted by $\ell_t$'s,  satisfy $\lim_{t\to \infty}\ell_{t} = 0$ and $\sum_{t\in \mathcal{E}_{h}} \ell_{t} \leq \widetilde{\phi}(|\mathcal{E}_{h}|)$ for some increasing function $\widetilde{\phi}:\R_{+}\to \R^{+}$ and $\widetilde{\phi}(x) \leq \mathcal{O}(x^{1-\tote})$. Then this buyer algorithm is $\xi$-adaptive for $\xi = \min\{\frac{1}{2},\tote\}$. Further, by setting  exploration episode length $E = T^{1-\xi + \epsilon}$  for some $\epsilon \in (0,\xi)$ independent of $T$, the seller regret is exactly that in Equation (\ref{eq:sellregret}) of Theorem \ref{thm:episodicprice} for large enough $T$. On the buyer side, we have
$ \buyreg \leq \mathcal{O}(T^{1-\xi})$ and the induced buyer decisions $\{z_{t}\}_{t\in [T]}$ satisfy 
\begin{align*}
    \frac{1}{T}  \sum_{t\in [T]} \E\left[\left(v_{t} - \gamma d_{t}\right)z_{t}\right]\geq - \Theta(T^{-\tote}) ~\text{ and }~
     \frac{1}{T} \sum_{t\in [T]}  \E\left[d_{t}z_{t} \right]\leq \rho + \Theta(T^{-\tote})\,.
 \end{align*}
\end{theorem}
We remark that best responding buyers considered in Section \ref{subsec:br} can be viewed as a special case of buyers with ML advice where the advice is perfect, i.e. $\ell_{t} = 0$ for all $t$ so $\widetilde{\phi}(x) \equiv 0$ and consequently $\tote = 1$. This recovers our results in Theorem \ref{corr:bestRespondBuyer}.

Here, we also quickly discuss the aggregate impact of buyer and seller learning error on the seller regret of our proposed Algorithm \ref{alg:pricing}. In particular, the constituent $T^{1-\xi} = T^{1-\min\{\frac{1}{2},\tote\}}$ in the seller regret arises from learning errors of both the buyer and the seller. We can view the seller's learning rate to be in the order of $t^{-\frac{1}{2}}$, and the buyer learning rate to be of order $t^{-\tote}$, and thus we see that the seller regret is governed by the agent that learns at a slower rate: if the buyer is learning more slowly, i.e. $\tote < \frac{1}{2}$, then the seller regret is driven by the buyer learning loss; a similar argument applies for the case when the buyer learns more quickly.

To conclude this section, we present a concrete example for buyers with ML advice: consider the simple ML advice that is an empirical estimate of the buyer's value distribution: 
\begin{align}
\label{eq:empiricalestimates}
\Hat{\bm{g}}_{t} = \frac{1}{t} \cdot (\sum_{\tau \in [t]}\I\{v_{t} = V^{1}\} \ldots \sum_{\tau \in [t]}\I\{v_{t} = V^{N}\})\,.
\end{align}
Then, both the buyer and seller regret are  characterized in the following theorem  (see proof in Appendix \ref{pf:thm:sellregretMLee}).

\begin{theorem}[Seller regret against  buyer with empirical distribution estimates]
\label{thm:sellregretMLee}
When the buyer approximate best responds with ML advice in the form of empirical estimates as defined in Equation (\ref{eq:empiricalestimates}), Theorem \ref{thm:sellregretML} holds for $\tote= \xi = \frac{1}{2}$ w.p. at least $1-1/T$.
\end{theorem}

\section{Additional Discussions}
\label{sec:adddisc}
\textbf{Continuous price set.} We remark that our main results in this paper, specifically the analyses of Algorithm \ref{alg:pricing} and the corresponding seller regret, can be easily extended to handle continuous seller price sets, as the seller regret in Theorem \ref{thm:episodicprice} only depends logarithmically on $M$ which we recall to be the size of a discrete price set. Assuming the price decision set  is $[0,1]$, the approach that the seller can take is to discretize the decision set into  $\mathcal{D} = \{\frac{1}{T},\frac{2}{T} \ldots 1\}$ with size $|\mathcal{D}| = T$. Recall $\pi(d)$ defined in Equation (\ref{eq:revenue}) is the expected per-period seller revenue under buyer best response, and define $d^{*}=\arg\max_{d\in [0,1]} \rev(d)$ to be the optimal price w.r.t. the continuous set, such that the seller regret is now $\sellreg = T \cdot  \pi(d^{*}) - \sum_{t\in [T]}\E\left[d_{t}z_{t}\right]$ (see Proposition \ref{prop:reformsellregret}). Then, for a price $\widetilde{d}\in \mathcal{D}$ in the discretized set $\mathcal{D}$ that is close to $d^{*}$ such that $|\widetilde{d} - d^{*}| < \frac{1}{T}$, similar to our proof in Theorem \ref{thm:sellregretML} we can show that the optimal solutions $\bm{x}_{d}$ and $\bm{x}_{\widetilde{d}}$ to the per-period buyer optimization problem $U(d)$ and $U(\widetilde{d})$ (see Equation (\ref{eq:defOpt})), respectively, are also close to one another. Further, we can show that $\rev(d^{*}) - \rev(\widetilde{d})\leq O(\frac{1}{T})$. Therefore, via running Algorithm \ref{alg:pricing} w.r.t. the discretized price set $\mathcal{D}$, our seller regret when facing a $\xi$-adaptive buyer (Definition \ref{def:adapt}) can be bounded as
\begin{align*}
    & \sellreg 
    ~=~  T \max_{d\in [0,1]}\rev(d) - \sum_{t\in [T]}\E\left[d_{t}z_{t}\right] \\
    ~=~ & T \underbrace{( \pi(d^{*}) - \max_{d\in \mathcal{D}} \rev(d))}_{\text{discretization error}} + T \max_{d\in \mathcal{D}} \rev(d) -\sum_{t\in [T]}\E\left[d_{t}z_{t}\right]\\
    ~\leq~ & T  ( \pi(d^{*}) -  \rev(\widetilde{d})) + T \max_{d\in \mathcal{D}} \rev(d) -\sum_{t\in [T]}\E\left[d_{t}z_{t}\right]\\
    ~\leq~ & \mathcal{O}(1) + T \max_{d\in \mathcal{D}} \rev(d) -\sum_{t\in [T]}\E\left[d_{t}z_{t}\right]\\
    ~\leq~ & \mathcal{O}(1) + \mathcal{O}(\log(T)T^{1-\xi+\epsilon} + \phi(T))\,,
\end{align*}
where the final inequality follows from the seller regret (Equation (\ref{eq:sellregret})) in Theorem \ref{thm:episodicprice} by setting the price set size $M = T$. That being said, the discretization error introduced to the seller regret is only in the order of $\mathcal{O}(1)$, and this is due to the 
the fact that the
bell-shape structure of seller's revenue (Theorem \ref{thm:pricing:optd}) along with our seller algorithm yields a seller regret that is logarthmic in the discrete price set size.

\textbf{Ethics of buyer-seller interactions.}
As modern online ad platforms run selling mechanisms to sell ad impressions, they also offer services for buyers to help procure ad impressions on their behalf. This raises  potential ethical concerns regarding the issue of platforms controlling both the buyer algorithms and auction/pricing protocol. For example, the platform can sets high prices and simultaneously run buyer algorithms that would accept such prices, leading to large platform revenue margins while enforcing high costs to buyers. Nevertheless, in reality, procurement (on behalf of buyers) and pricing are either conducted by different and independent entities, or two non-collusive parties of the same entity where collaboration and any type of information flow that encourages collusion is prohibited. Our main goal of this paper is to shed light on the possible behavior and dynamics for online advertising markets under real financial considerations, and we believe that preventing such collusive behavior between buyer-seller interactions is a future research direction of practical and ethical importance.

\textbf{Other future directions.} One natural future research direction that is of both theoretical and practical interest involves designing pricing algorithms when facing multiple financially constrained buyers. The multi-buyer analogue to our single-buyer posted price setup in this work is to set a single reserve price in each period over time where constrained buyers compete in a second-price auction (see e.g. setup in \cite{golrezaei2019incentive} for non-constrained buyers). The key challenge lies in the fact that in this multi-buyer setup we no longer have the salient bell-shape structure in the seller revenue function, and more importantly buyer algorithmic interactions introduce significant difficulties to the analyses of seller regret. Similar challenges that arise from selling to multiple learning buyers have also been discussed (but not resolved) in related works such as \cite{braverman2018selling,deng2019prior}.

\bibliographystyle{informs2014}
\bibliography{ref}

\newpage
\begin{APPENDICES}
\begin{center}
\vspace{0.8cm}
    \Large Appendices for\\
    \vspace{0.2cm}
    \Large \textbf{Pricing against a Budget and ROI Constrained Buyer}
    \noindent\makebox[\linewidth]{\rule{1\linewidth}{0.9pt}}
\end{center}
\setcounter{page}{1}

\section{Extended Literature Review}
\label{app:relatedwork}

As the most closely related works have been discussed in the introduction section, here we only further discuss broader related works.

\paragraph{Other related work in online resource allocation}
 There has been extensive research on online resource allocation with budget/capacity constraints (see e.g. \cite{kleinberg2005multiple,devanur2009adwords,agrawal2016efficient}) and here we briefly discuss those that are the most relevant.\footnote{The buyer's online bidding problem can be viewed as an online resource allocation problem.  However, a key difference is that in bidding, the buyer does not observe the highest competing bid $d_{t}$ (equivalently the amount of resource depleted) before making a decision; as in the resource allocation problem, both the reward and resource depletion are revealed before decision making. Therefore, to apply a resource allocation algorithm in the bidding problem, one must additionally impose some bidding mechanic that indirectly achieves the desired allocation through constructing appropriate bid values. } \cite{zhou2008budget} studies the budget-constrained bidding problem for sponsored search in an adversarial setting and present an algorithm with competitive ratio that depends on upper and lower bounds on the value-to-cost ratios; \cite{babaioff2007knapsack,arlotto2019uniformly} study variants of the knapsack and secretary problems under the random order arrival model and stochastic arrival model, respectively, both presenting near optimal algorithms in their respective settings. Our work differs from this line of research as we incorporate an ROI constraint while also considering the problem of how to price against budget and ROI constrained buyers.
Finally, \cite{agrawal2014dynamic}  utilizes a primal-dual framework to study online linear programming (LP) with packing constraints, where the positive-valued constraint matrix is revealed column by column (each column corresponds to a highest competing bid $d_{t}$) along with the corresponding objective coefficient (corresponding to utility $v_{t} - \alpha d_{t}$). Their algorithm determines the decision variable corresponding to the arriving column based on the dual variables of past revealed columns.



\paragraph{Online bidding in repeated auctions under feedback constraints}
Other than budget capacities and ROI targets, buyers are also typically constrained in terms of the amount information available as they participate in auctions. For example, \cite{balseiro2019contextual} studies bidding problem in first price auctions under different feedback structures where an unconstrained quasi-linear  buyer only  observes whether or not she wins  the auction, and \cite{han2020optimal,han2020learning} study a similar problem  where the buyer also gets to observe the highest competing bid if she did not win the auction. 
As another related work, \cite{weed2016online} studies the bidding problem where 
the buyer does not know her valuation before submitting her bid, and only observes her valuation if she wins the auction. The work considers the stochastic and adversarial highest competing bid settings, and presents algorithms that build on the UCB and EXP3 algorithms, respectively.

\paragraph{Online optimization with covering constraints}
The buyer's ROI constraint takes the form of a long-term covering constraint. 
The related
problem of optimization under online covering constraints have been studied in   \cite{alon2003online,azar2013online,azar2014online}. However, the setting in these works differ from ours: Instead of making irrevocable online decisions, these works focus on updating a decision vector upon the arrival of a covering constraint each period such that this constraint is satisfied. In other words, they consider the decision problem where covering constraints are satisfied in each period, while our buyers of interest only need to satisfy the covering (ROI) constraint in the long run. Another key difference is that in these works the covering constraints are all positive, which means these constraints can be easily satisfied (per period) by increasing each entry of the decision vector. On  the contrary, in our problem the ROI balance per period $(v_{t} - \gamma d_{t})z_{t}$ may be negative, and hence makes constraint satisfaction more difficult.

\paragraph{Autobidding.} This paper is also related to a recent line of work that studies so-called ``autobidders'' who simultaneously participate in parallel auctions with the aim to maximize total value  subject to a global ROI and budget constraint, which says that total value accrued across auctions is no less than total spend times some multiple (i.e. the target ROI), and the total spend is less than a global budget. \cite{aggarwal2019autobidding} has first formulated the optimization problem for autobidders, and presented optimal bidding strategies for such bidders when all parallel auctions are truthful. \cite{deng2021towards, balseiro2021robust} study the price of anarchy when  multiple autobidders bid in parallel auctions of classic formats such as VCG, GSP and GFP. \cite{deng2022fairness} show auctioneers can set personalized reserve prices using predictions on bidder values (i.e. machine-learned advice) to improve welfare guarantees for individual bidders. 

\section{Additional definitions}
In this section, we introduce some additional definitions that will be used throughout the appendices.

\begin{definition}[Threshold vectors]
\label{def:thresholdvect}
We say that an $N$-dimensional vector $\bm{x}\in \R^{N}$ is a threshold vector if it takes the form of $\bm{x} =(1 \dots 1,  q, 0 \dots 0)$, where the first $J \in \{0,\dots N\}$ entries are 1's, followed by some number $q \in [0,1)$, and trailing with $(N-J-1)_{+}$ 0's.\footnote{For the edge case of $(1,\dots 1)\in \R^{N}$, $J =N$ and hence the number of trailing 0's is $(N-J-1)_{+} = 0$.} Any threshold vector is uniquely characterized by its dimension $N$, as well as, a tuple $(J,q) \in \{0,\dots N\}\times [0,1)$, so we denote the vector as  $\thr(J,q)$. In the special case when $J = N$, take $q = 0$.
\end{definition}

 For any two vectors $\bm{a},\bm{b} \in \R^{n}$, let $\min\{\bm{a},\bm{b}\} = (\min\{a_{i},b_{i}\})_{i\in[n]}$ be the element-wise minimum. We write $\bm{a} \preceq \bm{b}$ if and only if $a_{i} \leq b_{i}$ and  $\bm{a} \succeq \bm{b}$
 if and only if $a_{i} \geq b_{i}$ for all $i \in [n]$.

\section{Proofs for Section \ref{sec:reform}}
\label{app:reform}

\subsection{Proof of Lemma \ref{lem:thresholdOptSol}}
\label{pf:lem:thresholdOptSol}
Here, we show a more detailed version of the lemma stated as followed:

\begin{theorem}[Detailed version of Lemma \ref{lem:thresholdOptSol}]
\label{lem:thresholdOptSolext}
For a fixed price $d$, define
\begin{align}
\label{eq:threshold}
\begin{aligned}
    &   \jroi = \max\big\{n\in [N]: \sum_{\ell \in[n]} g_{\ell} \left(\V_{\ell} - \gamma d\right)\geq 0\big\}, \quad q_{\ROI} = \frac{\sum_{k \in [\jroi]}g_{n} \left(\V_{n} - \gamma d\right)}{g_{\jroi+1}\cdot\left|V_{\jroi+1} - \gamma d \right|}\,,\\
     &  \jbud= \max\big\{n\in [N]: d\sum_{\ell \in[n]} g_{\ell} \leq \rho\big\}, \quad \text{ and } \quad q_{\bud} = \frac{\rho - d \sum_{k \in [\jbud]} g_{n}}{g_{\jbud+1} \cdot d}\,,
    \end{aligned}
\end{align}
If we let $\bm{x}_{\ROI} =\thr(\jroi,q_{\ROI})$ and $\bm{x}_{\bud} =\thr(\jbud,q_{\bud})$ be two threshold vectors (see Definition \ref{def:thresholdvect}), then $\bm{x}_{d} =  \min\left\{\bm{x}_{\ROI},\bm{x}_{\bud}\right\}$
is the unique optimal solution to $\util(d)$ in Equation \eqref{eq:pricing:buyerutil}. Furthermore, $\bm{x}_{d}$ is also a threshold vector characterized by tuple $(J,q)$ where 
\begin{align} 
\label{eq:optremainderprob}
    \J =\min\{ \jroi, \jbud\}, \quad  q = x_{d,\J+1} =  \min\left\{x_{\bud,\J+1},x_{\ROI,\J+1}\right\}\,.
\end{align}
\end{theorem}

\textit{Proof.}

Our proof for Theorem  \ref{lem:thresholdOptSolext} consists of 3 steps:
\begin{itemize}
 \item \textbf{Step 1.} We show that $\bm{x}_{\bud}$ is the unique optimal solution to the ``budget constraint only'' problem:
    \begin{align}
       \budprim = \max_{\bm{x} \in [0,1]^{N}}~ \sum_{n\in [N]} g_{n}\V_{n}x_{n} \text{ s.t. } d\sum_{n\in[N]} g_{n} x_{n} \leq \rho \,,
    \end{align}
    \item \textbf{Step 2.}  We show that $\bm{x}_{\ROI}$ is the unique optimal solution the ``ROI constraint only'' problem:
    \begin{align}
       \roiprim = \max_{\bm{x} \in [0,1]^{N}}~ \sum_{n\in [N]}g_{n}\V_{n} x_{n} \text{ s.t. }\sum_{n\in[N]} g_{n}\left(\V_{n} - \gamma d\right)x_{n} \geq 0\,,
    \end{align}
    \item \textbf{Step 3.}  We show that 
    $\bm{x}_{d} = \min\{\bm{x}_{\bud} ,\bm{x}_{\ROI} \}$ is feasible to $\util(d)$. In other words, we show $\bm{x}_{d}$ is feasible to both $\budprim$ and $\roiprim$. 
\end{itemize}

\paragraph{Step 1.} 
We recognize that $\budprim $ is the  linear program (LP) relaxation of a 0-1 knapsack problem, in which the items' ``value-to-cost ratio'', namely $\frac{g_{n}\V_{n}}{d g_{n}} = \frac{\V_{n}}{d}$ are ordered: $\frac{\V_{1}}{d} > \ldots \frac{\V_{N}}{d}$ since $\V_{1} > \ldots \V_{N} > 0$. Therefore, it is a well known result that the  unique optimal solution to  $\budprim $ is exactly $\bm{x}_{\bud}$ (a threshold vector) defined in the statement of Theorem  \ref{lem:thresholdOptSolext}; see e.g. \cite{dantzig1957discrete} for the optimal solution to the 0-1 knapsack LP relaxation.
 
 \paragraph{Step 2.} 
Let $\widetilde{\bm{x}} \in [0,1]^{N}$ be any optimal solution to \roiprim. Define $\kappa= \max\{n\in [N]:\V_{n}\geq \gamma d\}$ so that $\V_{n}\geq \gamma d$ for all $n \leq \kappa$.  Then it is easy to see for any $n = 1 \ldots \kappa $, we have $\widetilde{x}_{n} = 1$. This is because if there exists some $j \leq \kappa$ such that $\widetilde{x}_{j} < 1$, then the solution $\bm{x} = (\widetilde{x}_{1} \dots \widetilde{x}_{j-1},1,\widetilde{x}_{j+1}, \dots \widetilde{x}_{N})$ is feasible and yields a strictly larger objective than $\widetilde{\bm{x}}$:
\begin{align}
    \sum_{n\in [N]} g_{n}\V_{n}x_{n}  - \sum_{n\in [N]} g_{n}\V_{n}\widetilde{x}_{n} =  \V_{j}(1-\widetilde{x}_{j}) > 0\,.
\end{align}
 Hence, the optimal solution to  $\roiprim$  takes the form of $\widetilde{\bm{x}} = (\underbrace{1 \dots 1}_{\kappa\text{ 1's}}, {y}_{\kappa+1}, \dots {y}_{N}) \in [0,1]^{N}$. Hence, we know that $\widetilde{\bm{y}}\defeq ({y}_{\kappa}, \dots {y}_{N})$ must satisfy
\begin{align}
\label{eq:yptilde}
      \widetilde{\bm{y}} 
       ~\in~ & \arg\max_{\bm{x} \in [0,1]^{N - \kappa}}~ \sum_{k=\kappa+1}^{N}g_{n}\V_{n} x_{n} \text{ s.t. }\sum_{k=\kappa+1}^{N} g_{n}\left(\gamma d - \V_{n} \right)x_{n} \leq \widetilde{{c}}
       \,,
 \end{align}
 where we defined $ \widetilde{{c}} = \sum_{n \in [\kappa]}g_{n}\left(\V_{n} - \gamma d\right) > 0$. Note that we have $\gamma d - \V_{n} > 0$ for all $k = \kappa+1 \dots N$, and hence the optimization problem in Equation \eqref{eq:yptilde} is again an LP relaxation of the 0-1 knapsack problem. Thus similar to Step 1, we again consider the ``value-to-cost-ratios'':  for any  $i,j \in \{\kappa+1 \dots N\}$, we have
 \begin{align*}
    \V_{i} > \V_{j} \Longleftrightarrow \frac{g_{i}\V_{i}}{g_{i}\left(\gamma d -\V_{i}\right)} > \frac{g_{j}\V_{j}}{g_{j}\left(\gamma d -\V_{j}\right)} \,. 
 \end{align*}
 Hence the ``value-to-cost-ratios''$ \frac{g_{n}\V_{n}}{g_{n}\left(\gamma d -\V_{n}\right)}$ decreases in $n$ for $n \in \{\kappa+1 \dots N\} $. Therefore, the optimal solution $\widetilde{\bm{y}}$ to the 0-1 knapsack LP relaxation in Equation \eqref{eq:yptilde} is again unique, and is a threshold vector (again see \cite{dantzig1957discrete}). Hence, the unique optimal solution to \roiprim is a threshold vector, and following Step 1., it is easy to see this unique optimal solution is $\bm{x}_{\ROI}$ defined in the statement of Theorem \ref{lem:thresholdOptSolext}.
 
 \paragraph{Step 3.}
 Since $g_{n}d > 0$ for all $n \in [N]$ and $\bm{x}_{d} = \min\{\bm{x}_{\bud} ,\bm{x}_{\ROI}\}\preceq \bm{x}_{\bud}$, we can apply Lemma \ref{lem:consttrans} (i) with $a_{n} = g_{n}d$, $\bm{Z} =\bm{x}_{\bud} $ and $\bm{Y} = \bm{x}_{d}$, which yields
 \begin{align*}
    d \sum_{n \in [N]} g_{n}x_{d,n} \leq d\sum_{n \in [N]} g_{n} x_{\bud,n} \leq \rho \,,
 \end{align*}
 where the last inequality is due to the fact that $\bm{x}_{\bud}$ is feasible to $\budprim$. This implies $\bm{x}_{d}$ is also feasible to $\budprim$.
 
 On the other hand, again define $\kappa= \max\{n\in [N]:\V_{n}\geq \gamma d\}$ so that $\V_{n}\geq \gamma d$ for all $n \leq \kappa$. Then since $\bm{x}_{d} = \min\{\bm{x}_{\bud},\bm{x}_{\ROI}\}\preceq \bm{x}_{\ROI}$, and 
 since $g_{n}\left(\V_{n}-\gamma d\right) > 0$ for $n = 1\dots \kappa$ and  $g_{n}\left(\V_{n}-\gamma d\right) < 0$ for $n =  \kappa+1 \dots N$, 
  we can apply Lemma \ref{lem:consttrans} (ii) with $b_{n} = g_{n}\left(\V_{n}-\gamma d\right)$, $\bm{Z} =\bm{x}_{\ROI} $ and $\bm{Y} = \bm{x}_{d}$, which shows
 \begin{align*}
      \sum_{n \in [N]} g_{n}\left(\V_{n}-\gamma d\right)x_{\ROI,n} \overset{(i)}{\geq} 0 \overset{(ii)}{\Longrightarrow }\sum_{n \in [N]} g_{n}\left(\V_{n}-\gamma d\right)x_{d,k} \geq 0\,,
 \end{align*}
 where (i) follows from the fact that $\bm{x}_{\ROI}$ is feasible to $\roiprim$ and (ii) follows from the first half of Lemma \ref{lem:consttrans} (ii). Hence $\bm{x}_{d}$ is also feasible to $\roiprim$.

 The rest of the proof is straightforward: $\budprim$, $\roiprim$ and  $\util(d)$ have the same objectives, while each of $\budprim$ and $\roiprim$ has one less constraint than $\util(d)$,
 respectively. So $ \budprim \geq \util(d)$ and $\roiprim \geq \util(d)$. If $\bm{x}_{d} = \bm{x}_{\bud}$, because from Step 3. we know $\bm{x}_{d}$ is feasible to $\util(d)$, then $\budprim = \util(d)$ and $\bm{x}_{d}$
is the optimal solution to both $\budprim$ and $\util(d)$. Similarly, when $\bm{x}_{d} = \bm{x}_{\ROI}$, $\bm{x}_{d}$
is the optimal solution to both $\roiprim$ and $\util(d)$. 

Finally, we argue  $\bm{x}_{d}$ is the unique optimal solution to $\util(d)$. Assume by contradiction there exists some other vector $\bm{x}\in [0,1]^{N}$ that is an optimal solution to $\util(d)$ and $\bm{x}_{d}\neq \bm{x}$. Then, 
again if $\bm{x}_{d} = \bm{x}_{\bud}$, we know that $ \budprim = \util(d)$, and because both $\bm{x}_{d}, \bm{x}$ achieve total value $\util(d)$, then both $\bm{x}_{d}, \bm{x}$ are optimal solutions to $\budprim$, which contradicts uniqueness of the optimal solution to $\budprim$ as argued in Step 1. Similarly, we can again arrive at a contradiction for the case when $\bm{x}_{d} = \bm{x}_{\ROI}$. Hence, the optimal solution to  $\util(d)$ is unique.
 
\halmos

\subsection{Proof of Proposition \ref{prop:reformsellregret}}
\label{pf:prop:reformsellregret}
The proof for this proposition consists of two steps. First, we show that the buyer's optimal hindsight problem w.r.t. a single price $d$, namely $\opt(d)$ in Equation (\ref{lem:thresholdOptSol}) is upper bounded by $T\cdot \util(d)$, which is the single-period myopic optimization problem denoted in Equation (\ref{eq:pricing:buyerutil}). Next, we show  playing the threshold strategy w.r.t. $\bm{x}_{d} \in [0,1]^{N}$ (i.e. the optimal solution to 
$\util(d)$) every period, gives the buyer a total value exactly $T\cdot \util(d)$ while simultaneously satisfying both budget and ROI constraints. Therefore playing the threshold strategy w.r.t. $\bm{x}_{d}$ is the optimal value maximizing strategy to the buyer under a fixed price across all periods.

\paragraph{Step 1.}
Recall the linear program (LP) in Equation (\ref{eq:pricing:buyerutil}) that denotes the buyer's single-period myopic optimization problem. It is easy to see the optimal value is bounded and the LP is feasible (consider the solution with all entries set to be 0). Then, strong duality holds, and therefore for any $d$, there exists corresponding optimal dual variables $(\lambda,\mu) \in \R_{+}^{2}$ s.t. 
\begin{align}
\begin{aligned}
    \util(d) =& \max_{\bm{x}\in[0,1]^{N}}\sum_{n \in [N]}\left(g_{n} (1+\lambda)V_{n} - (\gamma \lambda + \mu) d\right)x_{n} + \rho \mu\\
    =& \sum_{n \in [N]}\left(g_{n} (1+\lambda)V_{n} - (\gamma \lambda + \mu) d\right)_{+} + \rho \mu
    \end{aligned}
\end{align}
On the other hand, when the sequence of posted prices stays constant at $d$, we have
\begin{align}
\label{eq:optbuyersolUB}
\begin{aligned}
    \opt(d) ~\leq~& \max_{\bm{z}\in [0,1]^{T}}\sum_{t\in [T]} \E\left[\left( (1+\lambda)v_{t} - (\gamma \lambda + \mu) d\right)z_{t}\right] + T \rho \mu\\
    ~\leq~& \sum_{t\in [T]}\E\left[ \left( (1+\lambda)v_{t} - (\gamma \lambda + \mu) d\right)_{+}\right] + T \rho \mu\\
    ~=~& T\left(\sum_{n\in[N]} g_{n}\left( (1+\lambda)V_{n} - (\gamma \lambda + \mu) d\right)_{+} + \rho \mu\right)\\
    ~=~& T \cdot \util(d)
    \end{aligned}
\end{align}

\paragraph{Step 2.}
Let $\bm{x}_{d} \in [0,1]^{N}$ be the optimal solution to $\util(d)$ in Equation (\ref{eq:pricing:buyerutil}). Then, the threshold strategy w.r.t $\bm{x}_{d}$ (see Definition \ref{def:threshstrat}) can be represented as 
\begin{align}
\label{eq:threshsol}
    z_{t}^{*}= \sum_{n\in [N]}x_{d,n}\I\{v_{t} = V_{n}\}
\end{align}
 It is easy to see $\{z_{t}^{*}\}_{t\in [T]}$ is feasible to the buyer's optimal hindsight problem $\opt(d)$ because:
 \begin{align}
 \begin{aligned}
 \label{eq:brroigood}
     \E\left[ \sum_{t\in [T]} \left(v_{t} - \gamma d\right)z_{t}^{*}\right] ~=~ &
     \sum_{t\in [T]} \E\left[\left(v_{t} - \gamma d\right)\sum_{n\in [N]}x_{d,n}\I\{v_{t} = V_{n}\}\right]\\
     ~=~ &
     \sum_{t\in [T]} \sum_{n\in [N]}g_{n}(V_{n} - \gamma d)x_{d,n} \overset{(i)}{\geq} 0\\
  \end{aligned}
 \end{align}
 and 
 \begin{align}
 \begin{aligned}
 \label{eq:brbudgetgood}
 \E \left[\sum_{t\in [T]} d z_{t}^{*} \right]~=~ & 
d \sum_{t\in [T]}  E \left[\sum_{n\in [N]}x_{d,n}\I\{v_{t} = V_{n}\}\right]\\
~=~ & 
T\cdot d \sum_{t\in [T]} \sum_{n\in [N]}g_{n}x_{d,n}\\
 \overset{(ii)}{\leq} \rho T
  \end{aligned}
 \end{align}
 where both (i) and (ii) hold because $\bm{x}_{d}$ is feasible to $\util(d)$. Finally, the threshold strategy yields a total value exactly $T\util (d)$ because
 \begin{align}
  \label{eq:brutil}
     \sum_{t\in [T]}\E \left[ v_{t} z_{t}^{*} \right] = \sum_{t\in [T]}
     \E \left[
    v_{t} \sum_{n\in [N]}x_{d,n}\I\{v_{t} = V_{n}\}\right] = T\cdot  \sum_{n\in [N]}g_{n}V_{n}x_{d,n} = T \cdot \util(d)\,,
 \end{align}
 where the final equality follows from the fact that $\bm{x}_{d}$ is optimal to $\util(d)$. 
 
 Therefore, in light of the upper bound shown in Equation (\ref{eq:optbuyersolUB}), the threshold strategy in Equation (\ref{eq:threshsol}) is optimal to the buyer's hindsight problem $\opt(d)$.
 
 Finally, the seller's revenue under the buyer's optimal threshold strategy is $d\sum_{t\in [T]}z_{t}^{*} =  T\cdot  \sum_{n\in [N]}g_{n} x_{d,n} = T\cdot \pi(d)$ where $\pi(d)$ is the per-period revenue defined in Equation (\ref{eq:revenue}).
 \halmos

 \subsection{Proof of Theorem \ref{thm:pricing:optd}}
\label{pf:thm:pricing:optd}

Our proof relies on the following fact
\begin{fact}
\label{fact:nonbinding}
If price $d$ is nonbinding, then the corresponding optimal solution $\bm{x}_{d}$ to $\util(d)$ is $\bm{x}_{d} = (1 \dots 1) \in \R_{+}^{n}$.
\end{fact}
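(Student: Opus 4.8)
The plan is to prove this by a direct perturbation argument against optimality: if the optimal solution $\bm{x}_d$ to $\util(d)$ had any coordinate strictly below $1$, I could nudge that coordinate upward while preserving feasibility, strictly increasing the objective, and thereby contradict optimality. First I would record the structural facts that make this work. Since the buyer here is value-maximizing ($\alpha = 0$) and all valuations are positive, Theorem~\ref{lem:thresholdOptSol} gives $\kappa_{\alpha} = N$, so the cap $\thr(\kappa_{\alpha},0) = (1,\dots,1)$ never binds and $\bm{x}_d = \min\{\bm{x}^{\ROI},\bm{x}^{\bud}\}$ is a threshold vector. More importantly for the argument, the objective $\sum_{n\in[N]}\valpe^{n}\V^{n}x^{n}$ is strictly increasing in each coordinate because $\valpe^{n} > 0$ and $\V^{n} > 0$ for every $n$.

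Next I would set up the contradiction. Suppose $\bm{x}_d \neq (1,\dots,1)$; then there is an index $m\in[N]$ with $x_d^{m} < 1$. I would consider the perturbed vector $\bm{x}' = \bm{x}_d + \Delta\,\bm{e}^{m}$, where $\bm{e}^{m}$ is the $m$-th unit vector and $\Delta > 0$ is chosen to simultaneously respect the box and budget constraints via
\begin{equation*}
0 < \Delta \le \min\Big\{\,1 - x_d^{m},\ \tfrac{\rho - d\sum_{n\in[N]}\valpe^{n}x_d^{n}}{d\,\valpe^{m}}\,\Big\},
\end{equation*}
and, additionally in the case $\V^{m} < \gamma d$, also $\Delta \le \frac{\sum_{n\in[N]}(\V^{n}-\gamma d)\valpe^{n}x_d^{n}}{(\gamma d - \V^{m})\,\valpe^{m}}$. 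The key point is that both upper bounds are strictly positive precisely because $d$ is nonbinding: the budget slack $\rho - d\sum_{n}\valpe^{n}x_d^{n} > 0$ makes the second bound positive and keeps $d\sum_{n}\valpe^{n}x'^{n}\le\rho$, while the ROI slack $\sum_{n}(\V^{n}-\gamma d)\valpe^{n}x_d^{n} > 0$ makes the extra bound positive and keeps the ROI balance $\ge 0$. When $\V^{m}-\gamma d \ge 0$, increasing $x^{m}$ only improves the ROI balance, so no extra restriction on $\Delta$ is needed; this sign split is why I treat the ROI bound separately rather than dividing by a possibly-zero quantity.

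Finally I would compute the effect on the objective: $\sum_{n\in[N]}\valpe^{n}\V^{n}x'^{n} = \util(d) + \valpe^{m}\V^{m}\Delta > \util(d)$ since $\valpe^{m},\V^{m},\Delta > 0$, contradicting the optimality of $\bm{x}_d$. Hence every coordinate of $\bm{x}_d$ must equal $1$, i.e. $\bm{x}_d = (1,\dots,1)$. I do not anticipate any genuine obstacle here, as this is a routine first-order/perturbation argument; the only points requiring a little care are that the single step size $\Delta$ must honor all the feasibility bounds at once (handled by taking the minimum) and that the sign of $\V^{m}-\gamma d$ is unrestricted (handled cleanly by the strict ROI slack together with the case split above).
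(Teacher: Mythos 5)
Your proof is correct and takes essentially the same route as the paper's own argument: both perturb a coordinate $x_d^{m} < 1$ upward by a step size defined as the minimum of slack-based bounds (whose strict positivity is exactly the nonbinding hypothesis), verify feasibility with the same sign split on $\V^{m} - \gamma d$, and conclude with a strict objective increase contradicting optimality of $\bm{x}_d$. If anything, your version is marginally tidier, since you also cap $\Delta$ at $1 - x_d^{m}$ so the perturbed point stays in $[0,1]^{N}$, and you avoid dividing by $\left|\V^{m} - \gamma d\right|$ when it could vanish --- two small points the paper's writeup glosses over.
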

\begin{proof}
We prove the claim via contradiction.
Assume there is some index $k\in [N]$ such that $\bm{x}_{d,k} < 1$. Then consider the solution $\bm{x} = (x_{d,1}\dots x_{d,k-1}, y ,x_{d,k+1},\dots x_{d,n})$ where we replaced the $k$'th entry of $\bm{x}_{d}$ with 
\begin{align}
    y = x_{d,k}+
    \epsilon,\quad \text{ where } \epsilon \defeq \min\left\{1 - x_{d,k}, \frac{\rho - \sum_{n\in [N]}\valpe_{n}x_{d,n}}{d\valpe_{k}}, \frac{\sum_{n\in [N]}\left(\V_{n}- \gamma d\right)\valpe_{n}x_{d,n}}{\left|\V_{k}- \gamma d\right|\valpe_{k}} \right\} \overset{(i)}{>} 0\,, \notag
\end{align}
where (i) follows from the fact that $\bm{x}_{d}$ is nonbinding, i.e. $\rho > \sum_{n\in [N]}\valpe_{n}x_{d,n}$
and $\sum_{n\in [N]}\left(\V_{n}- \gamma d\right)\valpe_{n}x_{d,n}> 0$. Then 
\begin{align}
    d\sum_{n\in [N]}\valpe_{n}x_{n} ~=~ &  
    d\sum_{n\in [N]}\valpe_{n}x_{d,n} + d\valpe_{k}\epsilon
    ~\leq~  d\sum_{n\in [N]}\valpe_{n}x_{d,n} + \left(\rho - \sum_{n\in [N]}\valpe_{n}x_{d,n}\right) = \rho\,. \notag
\end{align}
On the other hand, if $\V_{k}- \gamma d > 0$, then 
\begin{align}
  \sum_{n\in [N]}\left(\V_{n}- \gamma d\right)\valpe_{n}x_{d,n}= \sum_{n\in [N]}\left(\V_{n}- \gamma d\right)\valpe_{n}x_{d,n} + \left(\V_{k}- \gamma d\right)\valpe_{k} \epsilon  > \sum_{n\in [N]}\left(\V_{n}- \gamma d\right)\valpe_{n}x_{d,n} > 0\,.\notag
\end{align}
If $\V_{k}- \gamma d < 0$, then 
\begin{align*}
  \sum_{n\in [N]}\left(\V_{n}- \gamma d\right)\valpe_{n}x_{d,n}~=~ & \sum_{n\in [N]}\left(\V_{n}- \gamma d\right)\valpe_{n}x_{d,n} + \left(\V_{k}- \gamma d\right)\valpe_{k} \epsilon\\ 
  ~\geq~ & \sum_{n\in [N]}\left(\V_{n}- \gamma d\right)\valpe_{n}x_{d,n} + \left(\V_{k}- \gamma d\right)\cdot \frac{\sum_{n\in [N]}\left(\V_{n}- \gamma d\right)\valpe_{n}x_{d,n}}{\left|\V_{k}- \gamma d\right|} = 0
\end{align*}
where in the last equality we used $\left|\V_{n}- \gamma d\right| = - \left(\V_{n}- \gamma d\right)$ since $\V_{n}- \gamma d < 0$. 

The above shows $\bm{x}$ is feasible to $\util(d)$. On the other hand, $\sum_{n \in [N]}\V_{n}\valpe_{n}x_{d,n} < \sum_{n \in [N]}\V_{n}\valpe_{n}x_{n} $, so $\bm{x}$ yields a strictly larger objective than $\bm{x}_{d}$, contradicting the optimality of $\bm{x}_{d}$.
\end{proof}

We now return to our proof for Theorem \ref{thm:pricing:optd}.

\textit{(1).}
When both $d,\widetilde{d}$ are non-binding, Fact \ref{fact:nonbinding} implies $\bm{x}_{d} =\bm{x}_{\widetilde{d}}= (1 \dots 1)$. 
\begin{align*}
   \rev(d) = d  \sum_{n\in[N]}\valpe_{n}x_{d,n} = d\sum_{n\in[N]}\valpe_{n} < \widetilde{d}\sum_{n\in[N]}\valpe_{n} = \widetilde{d}\sum_{n\in[N]}\valpe_{n}x_{\widetilde{d},n}  = \rev(\widetilde{d})\,.
\end{align*}

\textit{(2).}
We prove this claim by contradiction. Assume $\widetilde{d}$ is non-binding and $\widetilde{d} > d$ where $d$ is budget binding. Fact \ref{fact:nonbinding} states that  $\bm{x}_{\widetilde{d}}= (1 \dots 1)$.
Hence 
\begin{align}
    \rho = \rev(d) = d  \sum_{n\in[N]}\valpe_{n}x_{d,n} \leq d\sum_{n\in[N]}\valpe_{n}x_{\widetilde{d},n}  
    < \widetilde{d}\sum_{n\in[N]}\valpe_{n}x_{\widetilde{d},n}  \overset{(i)}{<} \rho\,, \notag
\end{align}
where (i) follows from the definition that  $\widetilde{d}$  is non-binding. Hence we obtain a contradiction, and
$\widetilde{d}$ cannot be non-binding. This means $\widetilde{d}$ must be budget or ROI binding. 

\textit{(3).}
Here we show that if some price $d \in \mathcal{D}$ is ROI binding so that $\sum_{n \in [N]}(\V_{n} - \gamma d)\valpe_{n}x_{d,n} = 0$, any price
$\widetilde{d} > d$ must also be ROI binding. We first claim that $\bm{x}_{\widetilde{d}}\preceq \bm{x}_{d}$. To show this, we use a contradiction argument by assuming $\bm{x}_{\widetilde{d}}\succeq \bm{x}_{d}$.

Let the threshold vector $\bm{x}_{d} $ be characterized by $\bm{x}_{d} = \thr(J,q)$ (see definition of threshold vectors in Definition \ref{def:thresholdvect}). Under Assumption \ref{assum:price:nontriv}, we note that $\bm{x}_{d}$ cannot have all 0 entries and hence $x_{d,1} > 0$. However, since $\sum_{n \in [N]}(\V_{n} - \gamma d)\valpe_{n}x_{d,n} = 0$, it must be the case that $\V_{J+1} - \gamma d < 0$. Now, applying the ordering property for threshold vectors in the second half of Lemma \ref{lem:consttrans} (ii) by taking $\bm{Z} =\bm{x}_{\widetilde{d}} $,  $\bm{Y} = \bm{x}_{d}$, and $b_{i} = \V_{i} - \gamma d$ we have
 \begin{align}
     0= \sum_{n \in [N]}(\V_{n} - \gamma d)\valpe_{n}x_{d,n} \geq \sum_{n \in [N]}(\V_{n} - \gamma d)\valpe_{n}x_{\widetilde{d},n}> \sum_{n \in [N]}(\V_{n} - \gamma \widetilde{d})\valpe_{n}x_{\widetilde{d},n} \,. \notag
 \end{align}
 In the last inequality we used 
the fact that $\widetilde{d} > d$. Hence, this contradicts the feasibility of 
$\bm{x}_{\widetilde{d}}$, so we conclude that $\bm{x}_{\widetilde{d}}\preceq \bm{x}_{d}$. This further implies
\begin{align}
   \rho \geq  \underbrace{d\sum_{n \in [N]}\valpe_{n}x_{d,n}}_{=\rev(d)} \overset{(i)}{=} \frac{1}{\gamma}\sum_{n \in [N]}\V_{n} \valpe_{n}x_{d,n}  \overset{(ii)}{>} \frac{1}{\gamma}\sum_{n \in [N]}\V_{n} \valpe_{n}x_{\widetilde{d},n} \overset{(iii)}{\geq} \underbrace{\widetilde{d}\sum_{n \in [N]}\valpe_{n}x_{\widetilde{d},n}}_{=\rev(\widetilde{d})}\,, \notag 
\end{align}
where (i) follows from $d$ being 
ROI binding, i.e. $\sum_{n \in [N]}(\V_{n} - \gamma d)\valpe_{n}x_{d,n} = 0$;  (ii) follows from $\bm{x}_{\widetilde{d}}\preceq \bm{x}_{d}$; (iii) follows from feasibility of $\widetilde{d}$ so that $\sum_{n \in [N]}(\V_{n} - \gamma \widetilde{d})\valpe_{n}x_{\widetilde{d},n} \geq 0$. Therefore, $\rho \geq \rev(d) > \rev(\widetilde{d})$.

Finally, $\rho > \rev(\widetilde{d})$ implies that 
$\widetilde{d}$  is either non-binding or ROI binding. We note that it is not possible for $\widetilde{d}$ to be non-binding, because 
$\widetilde{d}$ non-binding implies $\bm{x}_{\widetilde{d}}= (1 \dots 1)$ according Fact \ref{fact:nonbinding}, contradicting $\bm{x}_{\widetilde{d}}\preceq \bm{x}_{d}$ which we showed earlier. Here we used the fact that 
$\bm{x}_{d} \neq (1 \dots 1)$ because $\bm{x}_{d}$ is ROI binding and Assumption  \ref{assum:price:nontriv} states for any $d \in \mathcal{D}$, 
$\sum_{n \in [N]}(\V_{n} - \gamma d)\valpe_{n} \neq 0$.
\halmos

\subsection{Additional lemmas for Section \ref{sec:reform}}
\begin{lemma}[Ordering property for threshold vectors]
 \label{lem:consttrans}
 Consider $\{a_{i}\}_{i \in [N]} \subseteq \R_{+}^{N}$ and  $\{b_{i}\}_{i \in [N]} \subseteq \R^{N}$ where there exists some $j \in [N]$ such that $b_{i} > 0$ for all $i =1\dots j$ and $b_{i}< 0$ for all $i = j+1 ,\dots m$. Let $\bm{Z},\bm{Y} \in [0,1]^{N}$ be two threshold vectors (see Definition \ref{def:thresholdvect}) such that $\bm{Y} = \thr(\J_{Y},q_{Y})$, $\bm{Z} = \thr(\J_{Z},q_{Z})$, and $\bm{Z} \succeq\bm{Y}$. Then the following hold: 
 \begin{itemize}
     \item[(i)] $\sum_{i \in [N]} a_{i}Z_{i}\geq \sum_{i \in [N]} a_{i}Y_{i}$. 
     \item[(ii)] If 
     $\sum_{i \in [N]} b_{i}Z_{i}\geq 0$ then $\sum_{i \in [N]} b_{i}Y_{i} \geq 0$. Furthermore, if $b_{\J_{Y}+1} < 0$, then $\sum_{i \in [N]} b_{i}Y_{i} \geq \sum_{i \in [N]} b_{i}Z_{i} \geq 0$.
     \item[(iii)] If
     $\sum_{i \in [N]} b_{i}Y_{i} < 0$ then $\sum_{i \in [N]} b_{i}Z_{i} < 0$.
 \end{itemize}
 \end{lemma}
 
 \textit{Proof.}
 \paragraph{(i)}Since $a_{i} > 0$ for all $i \in [N]$, and $\bm{Z}\succeq \bm{Y}$ (i.e. $Z_{i}\geq Y_{i}$ for all $i\in [N]$), it is easy to see $\sum_{i\in [N]}a_{i}Z_{i} \geq \sum_{i\in [N]}a_{i}Y_{i}$.

\paragraph{(ii)}

By the definition of threshold vectors, we have $Y_{J_{Y}+1} = q_{Y}$ while $Y_{i} = 0$ for all $i > J_{Y}+1$. We prove the claim by contradiction by 
assuming $\sum_{i\in [N]}b_{i}Y_{i} <0 $.

First,  it is easy to see $b_{J_{Y}+1} <0$. This is because if $b_{J_{Y}+1} >0$, then $b_{i} >0$ for all $i = 1 \dots J_{Y}+1$ by the definition of $\{b_{i}\}_{i\in[N]}$, and hence $\sum_{i\in [N]}b_{i}Y_{i} = \sum_{i\in [J_{Y}+1]}b_{i}Y_{i} \geq 0$ contradicting our assumption that $\sum_{i\in [N]}b_{i}Y_{i} <0 $ . 

Next, since $\sum_{i\in [N]}b_{i}Y_{i}  < 0\leq \sum_{i\in [N]}b_{i}Z_{i}$, we have $\sum_{i\in [N]}b_{i}(Z_{i} - Y_{i}) \geq 0$. On the other hand,

\begin{align*}
    \sum_{i\in [N]}b_{i}(Z_{i} - Y_{i}) \overset{(i)}{=} \sum_{i=J_{Y}+1}^{N}b_{i}(Z_{i} - Y_{i}) \overset{(ii)}{<} 0\,.
\end{align*}
Here, (i) follows from the definition of a threshold vector so that $Y_{i} = 1$ for all $i = 1\dots J_{Y}$ and also $Z_{i} =1$ for all $i = 1\dots J_{Y}$ due to $\bm{Z}\succeq \bm{Y}$. (ii) follows from the fact that 
$b_{J_{Y}+1} < 0$ so $b_{i} < 0$ for all $i \geq J_{Y}+1$ due to the definition of  $\{b_{i}\}_{i\in[N]}$. Hence, we arrive at a contradiction, which allows us to conclude
{the first half of the claim, i.e.
     $\sum_{i \in [N]} b_{i}Z_{i}\geq 0$ implies $\sum_{i \in [N]} b_{i}Y_{i} \geq 0$.}

{We now show the second half of the claim i.e. $b_{J_{Y}+1} < 0$ implies $\sum_{i \in [N]} b_{i}Y_{i} \geq \sum_{i \in [N]} b_{i}Z_{i} \geq 0$.} If $b_{J_{Y}+1}< 0$, then $b_{i} < 0$ for all $i =J_{Y}+1 + \dots J_{Z}+1 $, and hence
\begin{align*}
    \sum_{i\in [N]}b_{i}(Z_{i} - Y_{i}) =b_{J_{Y}+1}(Z_{J_{Y}+1} - Y_{J_{Y}+1}) +  \sum_{i=J_{Y}+2}^{J_{Z}+1}b_{i}Z_{i} \overset{(i)}{<} 0\,.
\end{align*}
Note that in the above inequality the summand $ \sum_{i=J_{Y}+2}^{J_{Z}+1}b_{i}Z_{i} $ does not exist if $J_{Y} = J_{Z}$, and in (i) we also used the fact that $Y_{i} = 0$ for all $i > J_{Y}+1$ using the definition of a threshold veector.

\paragraph{(iii)} 
We again use a contradiction argument by assuming $\sum_{i\in [N]}b_{i}Z_{i} \geq 0$, and the rest of the proof is almost identical to that of (ii) so we will omit it here.
\halmos

\section{Proofs for Section \ref{sec:pricingalgo}}
\label{app:pricingalgo}

\subsection{Proof of Theorem \ref{thm:episodicprice}}
\label{proof:thm:episodicprice}
Define $ G := \min_{d,\widetilde{d} \in \mathcal{D}:\rev(d) \neq \rev(\widetilde{d})}\left|\rev(d) - \rev(\widetilde{d})\right|$ to be the minimum revenue gap for all price pairs that do not yield the same revenue, where  $\rev(d) \defeq d\sum_{n\in [N]} \valpe_{n} x_{d,n}$ for any $d\in \mathcal{D}$ is the per-period average seller revenue  defined in Equation (\ref{eq:revenue}). Recall $\Hat{\rev}(D_{h}) = \frac{D_{h}}{|\mathcal{E}_{h}|}\sum_{t\in \mathcal{E}_{h}} z_{t}$ is the estimate of  $\rev(D_{h})$
for  episode $h $ with fixed price $D_{h}$ (see Algorithm \ref{alg:pricing}). 

For any exploration episode $\mathcal{E}_{h}$ whose length is $|\mathcal{E}_{h}| = T^{1-\xi + \epsilon}$, we have w.p. at least $1-1/T$
\begin{align}
\begin{aligned}
\label{eq:adapt:boundedness}
   &  \left|\frac{\Hat{\rev}(D_{h})}{D_{h}}- \frac{\rev(D_{h})}{D_{h}}\right| = \left|\frac{1}{|\mathcal{E}_{h}|}\sum_{t\in \mathcal{E}_{h}}z_{t} - \frac{\rev(D_{h})}{D_{h}}\right|\overset{(i)}{\leq} \frac{\phi(|\mathcal{E}_{h}|)}{|\mathcal{E}_{h}|} \overset{(ii)}{\leq} \frac{\phi(T)}{T^{1-\xi + \epsilon}}\\
   \overset{(iii)}{\Rightarrow} & \left|\Hat{\rev}(D_{h})-\rev(D_{h})\right|\leq \frac{\phi(T)}{T^{1-\xi + \epsilon}}
   \end{aligned}
\end{align}
where (i) is due to the definition of $\xi$-adaptive buyers in Definition \ref{def:adapt}; (ii) is due to the fact that $\phi$ is an increasing function and the exploration episode lengths are $|\mathcal{E}_{h}| = T^{1-\xi + \epsilon}$; (iii) is due to the fact that all prices are less than 1.

Since $\phi(T) = \mathcal{O}(T^{1-\xi})$, there exists some $T_{\epsilon} \in \N$ such that when $T> T_{\epsilon}$ we have
\begin{align}
\label{eq:buyererrsmallenough}
    \frac{\phi(T)}{T^{1-\xi + \epsilon}} < \frac{G}{2}
\end{align}


The rest of the proof relies on the following lemma:
\begin{lemma}
\label{fact:takeproberror}
Assume $T > T_{\epsilon}$ s.t. Equation \eqref{eq:buyererrsmallenough} holds. If $ \Hat{\rev}(D_{i}) \geq \Hat{\rev}(D_{j})$ for some exploration episodes $i,j$ s.t. $i\neq j$, then w.p. at least $1-\frac{1}{T}$, $\rev(D_{i}) \geq \rev(D_{j})$. Furthermore, the following event $\mathcal{G}$
\begin{align}
\label{eq:goodbinsearch}
    \mathcal{G} =\left\{\Hat{\rev}(D_{i}) \geq \Hat{\rev}(D_{j}) \Longrightarrow \rev(D_{i}) \geq \rev(D_{j}) ~\text{ for all exploration episodes } i\neq j\right\}
\end{align}
 holds with probability at least $1-\frac{H(H-1)}{2T}$, where $H= \left\lfloor\log_{2}(M)\right\rfloor + 1$ is the maximum number of binary search iterations (i.e. number of episodes in the exploration phase).
\end{lemma}
\begin{proof}[Proof of Lemma \ref{fact:takeproberror}]
Because $\Hat{\rev}(D_{i}) \geq \Hat{\rev}(D_{j})$, applying Equation (\ref{eq:adapt:boundedness}) for episodes $i,j$ yields
\begin{align}
   \rev(D_{i}) +\frac{\phi(T)}{T^{1-\xi + \epsilon}}  \geq \Hat{\rev}(D_{i}) \geq \Hat{\rev}(D_{j}) \geq \rev(D_{j}) - \frac{\phi(T)}{T^{1-\xi + \epsilon}} \Longrightarrow  \frac{2\phi(T)}{T^{1-\xi + \epsilon}} \geq \rev(D_{j}) - \rev(D_{i}) \,, \notag 
\end{align}

Now, contrary to our claim, suppose that $ \rev(D_{i}) <\rev(D_{j}) $. We then have  
\begin{align*}
   \frac{2\phi(T)}{T^{1-\xi + \epsilon}} \geq \rev(D_{j}) - \rev(D_{i}) \geq \min_{d,\widetilde{d} \in \mathcal{D}:\rev(d) \neq \rev(\widetilde{d})}\left|\rev(d) - \rev(\widetilde{d})\right|:= G\,, 
\end{align*}
which contradicts Equation \eqref{eq:buyererrsmallenough} for $T > T_{\epsilon}$. As there are $H(H-1)/2$ pairs $(i,j)$ such that $i\neq j$, a simple union bound shows event $\mathcal{G}$ holds with probability at least $1-\frac{H(H-1)}{2T}$.
\end{proof}

We now return to our proof of Theorem \ref{thm:episodicprice}. We first show that under event $\mathcal{G}$ (see Equation \eqref{eq:goodbinsearch}),the final price in the exploitation phase $\D_{m^{*}}$ is revenue-optimal, i.e. $\max_{d\in \mathcal{D}}\rev(d) =\rev\left(\D_{\term}\right)$


We use an induction argument that shows after each iteration of the binary search procedure in the exploration phase of Algorithm \ref{alg:pricing}, $\rev(\D_{m}) \leq \rev(\D_{\term})$ for all $m \leq \lef$ and $m\geq \rig$.  The base case is the first iteration, where we have
$\lef = 1$, $\rig = M $. If $\term = \lef = 1$, then  under event $\mathcal{G}$ we get
\begin{align*}
    \Hat{\rev}(\D_{1}) \geq \Hat{\rev}(\D_{M}) \overset{(i)}{\implies} \rev(\D_{1}) \geq \rev(\D_{M}) \,.
\end{align*}
 Hence after the first iteration $\rev(\D_{m}) \leq \rev(\D_{\term})$ for any
$m \leq \lef$ and $m\geq \rig$.  The case for $\term = \rig$ follows from the same argument.

Now assume that the induction hypothesis holds, i.e. at the beginning of some iteration  with the  tuple $(\lef,\rig, \term)$, we have $\rev(\D_{m}) \leq \rev(\D_{\term})$
$m \leq \lef$ and $m\geq \rig$. According to Algorithm \ref{alg:pricing}, we only need to show two cases in order to validate the induction procedure.
\begin{itemize}
    \item \textbf{Case 1.} If $ \Hat{\rev}(\D_{\med}) < \Hat{\rev}(\D_{\med+1})$, then we show $  \rev(\D_{m}) \leq \rev(\D_{\med+1}) $ for all $m = 1 \dots \med +1$
     \item \textbf{Case 2.} If $ \Hat{\rev}(\D_{\med}) \geq \Hat{\rev}(\D_{\med+1})$, then we show $\rev(\D_{m}) \geq \rev(\D_{\med}) $ for all $m = \med +1 \dots M$
\end{itemize}
Note that under Case 1., $\med+1$ will be the new value of $m^{*}$ in the next iteration (i.e. the next induction step). So by showing $ \rev(\D_{m}) \leq \rev(\D_{\med+1}) $ for all $m = 1 \dots \med +1$, we validate the induction hypothesis for the next induction step. A similar argument holds for Case 2.

\textbf{Case 1.} When $\Hat{\rev}(\D_{\med}) < \Hat{\rev}(\D_{\med+1})$, under event $\mathcal{G}$ (see Equation \eqref{eq:goodbinsearch}) we have $ \rev(\D_{\med}) \leq \rev(\D_{\med+1})$. We claim that $\D_{\med}$ cannot be an ROI binding price. Assume the contrary that $\D_{\med}$ is ROI binding. Then, part (3) of Theorem \ref{thm:pricing:optd}  states $\rev(\D_{\med+1}) < \rev(\D_{\med})$, leading to a contradiction. Hence $\D_{\med}$ must be either a nonbinding price or a budget binding price. Applying part (1) of Theorem \ref{thm:pricing:optd}, we can then conclude that for any $m \leq \med$, $\rev(\D_{m}) \leq \rev(\D_{\med})$, so
\begin{align}
    \rev(\D_{m}) \leq \rev(\D_{\med})\leq \rev(\D_{\med+1}) \quad \forall m = 1 \dots \med\,. \notag 
\end{align}
At the end of the iteration, as we update ${\term}^{+}= \med+1$ (here we denote ${\term}^{+}$ as the updated value to distinguish from its initial value at the start of the iteration), we have $\rev(\D_{{\term}^{+}}) \geq  \rev(\D_{\med+1}) \geq  \rev(\D_{\med}) \dots \rev(\D_{1})$. On the other hand, since $\Hat{\rev}(\D_{{\term}^{+}}) = \max_{m \in \{\term,\med+1\}} \Hat{\rev}(\D_{m}) \geq \Hat{\rev}(\D_{\term})$, event $\mathcal{G}$ implies
\begin{align}
    \rev(\D_{{\term}^{+}}) \geq \rev(\D_{\term}) \overset{(i)}{\geq} \rev(\D_{m})  \quad \forall m = \rig \dots M\,,  \notag 
\end{align}
 where (i) follows from the induction hypothesis. Therefore, we have
 \begin{align}
     \rev(\D_{{\term}^{+}}) \geq \rev(\D_{m}) \quad \forall m = \rig \dots M \text{ and } m =  1 \dots \med+1\,,\notag 
 \end{align}
 and by realizing the tuple $( \med+1, \rig, {\term}^{+})$ is the initial tuple for the next iteration concludes the induction step. 
 
\textbf{Case 2.} The case when $\Hat{\rev}(\D_{\med}) \geq \Hat{\rev}(\D_{\med+1})$ follows from an identical argument, and we will omit the details. This concludes the  induction proof.

The above implies that when the event $\mathcal{G} =\left\{\Hat{\rev}(D_{i}) \geq \Hat{\rev}(D_{j}) \Longrightarrow \rev(D_{i}) \geq \rev(D_{j}) ~\text{ for all }i,j \in [H]\right\}$ holds throughout the exploration phase, the above induction argument implies we have $ \rev(\D_{\term}) \geq  \rev(\D_{m})$for all $m \in [M]$. Hence $\rev(\D_{\term}) = \max_{d\in \mathcal{D}}\rev(d)$ w.p. at least $1- \frac{H(H-1)}{2T}$ according to Lemma \ref{fact:takeproberror} where $H= \left\lfloor\log_{2}(M)\right\rfloor + 1$.

Furthermore, we point out that in each iteration of the binary search procedure the seller explores at most two prices. Hence the entire 
exploration phase, which consists of all periods in exploration episodes and we denote as $\mathcal{E}$, has length at most $2E \left(\left\lfloor\log_{2}(M)\right\rfloor + 1\right) = 2T^{1-\xi + \epsilon}\left(\left\lfloor\log_{2}(M)\right\rfloor + 1\right)$ periods. Therefore, the seller's regret can be upper bounded as 
\begin{align*}
  \text{Reg}_{\text{sell}} 
    ~=~ & T\max_{d\in \mathcal{D}}\rev(d) - \sum_{t\in[T]}\E\left[d_{t}z_{t}\right] \\
    ~\leq~ & |\mathcal{E}| + \sum_{t =|\mathcal{E}| + 1}^{T}\max_{d\in \mathcal{D}}\rev(d) - \E\left[d_{t}z_{t}\right] \\ 
    ~\overset{(i)}{\leq}~ &  |\mathcal{E}| + 
     \sum_{t\in[T]/\mathcal{E} }\E\left[\left(\rev(\D_{\term}) - \D_{\term}z_{t}\right) \I\left\{\mathcal{G} \right\}\right]  + (T- |\mathcal{E}|) \prob\left(\mathcal{G}^{c}\right)\\ 
     ~\leq~ &  |\mathcal{E}| + 
    \D_{\term}(T- |\mathcal{E}|)\cdot \E\left[\frac{\rev(\D_{\term})}{\D_{\term}} - \frac{1}{T - |\mathcal{E}|}\sum_{t\in[T]/\mathcal{E}}z_{t} \right]  + (T- |\mathcal{E}|) \prob\left(\mathcal{G}^{c}\right)\\ 
      ~\overset{(ii)}{\leq}~  &  |\mathcal{E}| +
    \phi\left(T-|\mathcal{E}|\right) + (T-|\mathcal{E}|) \cdot \prob\left(\left|\frac{\rev(\D_{\term})}{\D_{\term}} - \frac{1}{T - |\mathcal{E}|}\sum_{t\in[T]/\mathcal{E}}z_{t} > \frac{\phi(T-|\mathcal{E}|)}{T-|\mathcal{E}|}\right|\right)+ T \prob\left(\mathcal{G}^{c}\right)\\ 
     ~\overset{(iii)}{\leq}~  &  |\mathcal{E}| +
    \phi\left(T-|\mathcal{E}|\right) + 1 + T \prob\left(\mathcal{G}^{c}\right)\\ 
     ~\overset{(iv)}{\leq}~ & 2 \left(\left\lfloor\log_{2}(M)\right\rfloor + 1\right) \cdot T^{1-\xi + \epsilon}  + 
    \phi\left(T\right) + \left(\left\lfloor\log_{2}(M)\right\rfloor + 1\right)^{2}/2\,.
\end{align*}
In (i) we used the fact that $\max_{d\in \mathcal{D}}\rev(d)= \rev(\D_{\term})$ under event $\mathcal{G}$ and $d_{t} = \D_{\term}$ for all exploitation periods $t \in [T]/\mathcal{E}$; in (ii) and (iii) we used the definition of $\xi$-adaptive buyer algorithm (see Definition \ref{def:adapt})  so that for the exploitation phase $[T]/\mathcal{E}$, the event
$\left|\frac{\rev(\D_{\term})}{\D_{\term}} - \frac{1}{T - |\mathcal{E}|}\sum_{t\in[T]/\mathcal{E}}z_{t}\right| \leq \frac{\phi(T-|\mathcal{E}|)}{T-|\mathcal{E}|}$ holds with probability at least $1-1/T$, and also $\phi$ is an increasing function;
In (iv), we used the fact that all periods in exploration episodes $\mathcal{E}$, has length at most $2E \left(\left\lfloor\log_{2}(M)\right\rfloor + 1\right) = 2T^{1-\xi + \epsilon}\left(\left\lfloor\log_{2}(M)\right\rfloor + 1\right)$ periods, and the fact that 
$\prob\left(\mathcal{G}^{c}\right) \leq \frac{\left(\left\lfloor\log_{2}(M)\right\rfloor + 1\right)\cdot \left\lfloor\log_{2}(M)\right\rfloor }{2T}$ according to Lemma \ref{fact:takeproberror}, so $1 + T \prob\left(\mathcal{G}^{c}\right)\leq \left(\left\lfloor\log_{2}(M)\right\rfloor + 1\right)^{2}/2$ given $M \geq 2$.
\qed

\section{Proofs for Section \ref{sec:buyeralgo}}
\label{app:buyeralgo}
\subsection{Proof of Lemma \ref{lem:bestRespondAdapt}}
\label{pf:lem:bestRespondAdapt}

Recall that when the buyer best responds, she adopts the threshold strategy w.r.t $\bm{x}_{d}$ where $\bm{x}_{d} \in [0,1]^{N}$ is the optimal solution to $\util(d)$ in Equation \eqref{eq:pricing:buyerutil}; see Definition \ref{def:br} for best response. Further, the threshold strategy can be represented as decision 
\begin{align*}
    z_{t}^{*} = \sum_{n\in [N]}x_{d,n}\I\{v_{t} = V_{n}\}\,.
\end{align*}
Then, for any exploration or exploitation episode $\mathcal{E}$ (whose posted price we denote as $d$), for the best response decisions $\{z_{t}\}_{t\in \mathcal{E}}$ defined above, we have for any $t\in \mathcal{E}$
\begin{align*}
    \E[d z_{t}^{*}] = d \sum_{n\in [N]}g_{n}x_{d,n} = \rev(d)
\end{align*}
where $\rev(d)$ is the per-period expected revenue defined in Equation \eqref{eq:revenue}. Hence, by defining
\begin{align*}
    Y_{t} = d z_{t}^{*} - \rev(d)
\end{align*}
we know that the sequence $\{Y_{t}\}_{t\in \mathcal{E}}$ is a martingale difference sequence such that that $|Y_{t}|\leq d\leq 1$ for all $t\in \mathcal{E}$. By Azuma Hoeffding's inequality (see Lemma \ref{lem:azuma}) we have for any $\delta \in (0,1)$
\begin{align*}
    \prob\left(\left|d\sum_{t\in \mathcal{E}} z_{t}^{*} - |\mathcal{E}|\cdot \rev(d)\right| > \sqrt{2|\mathcal{E}|\log\left(2/\delta\right)}\right)\leq \delta \,.
\end{align*}
Hence, by taking $\delta = 1/T$ and considering the increasing function $\phi(x) =\sqrt{2x\log\left(2T\right)}= \mathcal{O}(x^{1/2})$, for any exploration/exploitation episode $\mathcal{E}$ (whose price we denote as $d$) we have 
\begin{align*}
    \left| \frac{d}{|\mathcal{E}|} \sum_{t\in \mathcal{E}}z_{t}^{*} - \rev(d)\right| \leq \frac{\phi (|\mathcal{E}|)}{|\mathcal{E}|}
\end{align*}
with probability (w.p.) at least $1-1/T$. Therefore best responding is $\frac{1}{2}$-adaptive. 
\qed

\subsection{Proof of Theorem \ref{corr:bestRespondBuyer}}
\label{pf:corr:bestRespondBuyer}
For the seller, the regret upper bound is a direct result of Lemma \ref{lem:bestRespondAdapt} and Theorem \ref{thm:episodicprice}. 

On the other hand for the buyer, following the exact proof Step 2. in the proof of Proposition \ref{prop:reformsellregret} (see Appendix 
\ref{pf:prop:reformsellregret}), in particular Equations \eqref{eq:threshsol}, \eqref{eq:brroigood} and  \eqref{eq:brbudgetgood}, we know that by best responding, the buyer's budget and ROI constraints are satisfied. Finally, we bound the buyer's regret (see Definition in \eqref{eq:buyerregret}) as followed:

Let $d$ be the posted price in the final exploitation episode (see Algorithm \ref{alg:pricing}). Using an argument similar to Step 1. in the proof of Proposition \ref{prop:reformsellregret}, for the linear program (LP) $\util(d)$ in Equation (\ref{eq:pricing:buyerutil}) that denotes the buyer's single-period myopic optimization problem, it is easy to see the optimal value is bounded and the LP is feasible (consider the solution with all entries set to be 0). Then, strong duality holds, and there exists corresponding optimal dual variables $(\lambda,\mu) \in \R_{+}^{2}$ w.r.t. the exploitation price $d$ s.t. 
\begin{align}
\label{eq:brsingleutil}
\begin{aligned}
    \util(d) =& \max_{\bm{x}\in[0,1]^{N}}\sum_{n \in [N]}g_{n}\left( (1+\lambda)V_{n} - (\gamma \lambda + \mu) d\right)x_{n} + \rho \mu\\
    =& \sum_{n \in [N]}g_{n}\left( (1+\lambda)V_{n} - (\gamma \lambda + \mu) d\right)_{+} + \rho \mu
    \end{aligned}
\end{align}

Similar to Equation \eqref{eq:optbuyersolUB}, by denoting $\mathcal{E}$ to be all periods within exploration episodes, the buyer's hindsight objective can be bounded as
\begin{align}
\begin{aligned}
\label{eq:brbenchmarkUB}
 \opt(\bm{d}_{1:T}) ~\leq~& \max_{\bm{z}\in [0,1]^{T}}\sum_{t\in [T]} \E\left[\left( (1+\lambda)v_{t} - (\gamma \lambda + \mu) d\right)z_{t}\right] + T \rho \mu\\
    ~\leq~& \sum_{t\in [T]}\E\left[ \left( (1+\lambda)v_{t} - (\gamma \lambda + \mu) d\right)_{+}\right] + T \rho \mu\\
    ~=~& \sum_{t\in[T]}\left(\sum_{n\in[N]} g_{n}\left( (1+\lambda)V_{n} - (\gamma \lambda + \mu) d\right)_{+} + \rho \mu\right)\\
    ~\leq ~& (1+\lambda + \rho \mu) \cdot |\mathcal{E}| + \sum_{t\in[T]/\mathcal{E}}\left(\sum_{n\in[N]} g_{n}\left( (1+\lambda)V_{n} - (\gamma \lambda + \mu) d\right)_{+} + \rho \mu\right)\\
     ~\overset{(i)}{=} ~& \Theta(T^{\frac{1}{2}+\epsilon}) + (T-|\mathcal{E}|)\util(d)\,.
     \end{aligned}
\end{align}
Here (i) follows from Equation \eqref{eq:brsingleutil} and the fact that there are at most $2\left(\left\lfloor\log_{2}(M)\right\rfloor + 1\right)$ exploration episodes, which implies in $\mathcal{E}$
there are at most $2T^{1-\xi + \epsilon}\left(\left\lfloor\log_{2}(M)\right\rfloor + 1\right) = \Theta(T^{\frac{1}{2}+\epsilon})$ periods. The buyer's regret can be thus bounded as followed
\begin{align*}
     \buyreg = \opt(\bm{d}_{1:T}) - \sum_{t\in [T]}\E\left[v_{t}z_{t}\right] \leq \Theta(T^{\frac{1}{2}+\epsilon}) + (T-|\mathcal{E}|)\util(d) - \sum_{t\in [T]/\mathcal{E}}\E\left[v_{t}z_{t}\right]
    = \Theta(T^{\frac{1}{2}+\epsilon})
\end{align*}
where in the final equality, we used the fact that the buyer's expected utility is exactly $\util(d)$ for each exploitation period when best responding as shown in  Equation \eqref{eq:brutil}.
\qed

\subsection{Proof of Theorem \ref{thm:sellregretML}}
\label{pf:thm:sellregretML}
This proof consists of two parts, namely bounding seller's regret, and bounding buyer's regret as well the ``balance'' of buyer's budget and ROI constraints.

\paragraph{Part 1. Bounding seller's regret. } Here, we only need to show that the buyer's strategy is $\xi$-adaptive (see Definition in \ref{def:adapt}), and the rest of the proof follows from Theorem \ref{thm:episodicprice}.

For notation convenience, fix some exploration or exploitation episode $\mathcal{E}$, and denote the corresponding price in the episode as $d$.  In light of Lemma \ref{lem:thresholdOptSol}, we let $\bm{x}_{d}\in [0,1]^{N}$ be the unique optimal threshold vector (see Definition \ref{def:thresholdvect}) solution to $\util(d)$. According to the definition of the per-period seller expected revenue $\rev(d)$ under buyer best response
 in Equation (\ref{eq:revenue}), we can further write the seller's per-period expected revenue for episode $h$ as 
\begin{align}
\label{eq:benchmarkrev}
    \rev(d) =  d\sum_{n\in [N]} x_{d,n}g_{n}\,.
\end{align}

Let $\mathcal{F}_{t}$ be the sigma algebra generated by $\left\{(v_{\tau},d_{\tau},z_{\tau})\right\}_{\tau \in [t]}$, which characterizes all randomness in the buyer and seller's behavior up to period $t$. Recall $\Hat{\bm{x}}_{t}$ is the optimal solution to $U(d_{t})$ of Equation (\ref{eq:pricing:buyerutil}) via replacing the true distribution $\bm{g}\in \Delta_{N}$ with the estimate $\Hat{\bm{g}}_{t} \in \Delta_{N}$. The buyer adopting a threshold strategy w.r.t. $\Hat{\bm{x}}_{t}$ implies the buyer's decision to be
\begin{align}
\label{eq:buyerapprxaction}
    z_{t} = \sum_{n \in [N]}  \Hat{x}_{t,n}\I\{v_{t}= V_{n}\}
\end{align}
Since $\Hat{\bm{x}}_{t}$ is $\mathcal{F}_{t-1}$-measurable, for $t\in \mathcal{E}$ we have
\begin{align*}
 \E\left[z_{t}\Big |\mathcal{F}_{t-1}\right]
     ~=~ & \sum_{n\in [N]} g_{n}\Hat{x}_{t,n}
\end{align*}
Thus, the by defining
\begin{align}
    Y_{t}= \sum_{n \in [N]} g_{n} \Hat{x}_{t,n}  - z_{t}\,,
\end{align}
we know that the sequence $\{Y_{t}\}_{t\in \mathcal{E}}$ is a martingale difference sequence such that that $|Y_{t}|\leq 1$ for all $t$. By Azuma Hoeffding's inequality (see Lemma \ref{lem:azuma}) we have for any $\delta \in (0,1)$
\begin{align}
\label{eq:pricing:azuma}
    \prob\left(\widetilde{\mathcal{G}}\right) \geq 1-\delta \text{ where } \widetilde{\mathcal{G}} \defeq \left\{\left|\sum_{t\in \mathcal{E}}\left(\sum_{n \in [N]} g_{n} \Hat{x}_{t,n}  - z_{t}\right)\right| \leq \sqrt{2|\mathcal{E}|\log\left(2/\delta\right)}\right\} \,.
\end{align}

The remaining proof relies on the following lemma whose proof can be found in Appendix \ref{pf:lem:boundsol}
\begin{lemma}
\label{lem:boundsol}
Fix some price $d$ and define the following problem which is solved by the approximate best response buyer with ML advice to obtain $\Hat{\bm{x}}_{t}$ (see Definition \ref{def:approxbr}):
\begin{align}
\begin{aligned}
\label{eq:pricing:approxbuyerutil}
\Hat{\util}_{t}(d) =  \max_{\bm{x}\in [0,1]^{N}}  \sum_{n\in[N]} \Hat{\valpe}_{t,n}\V_{n} x_{n} 
 \quad \textrm{s.t.} &\sum_{n\in[N]} \Hat{\valpe}_{t,n}\left(\V_{n} - \gamma d\right)x_{n}\geq 0  \quad \text{ and }
 \quad d\sum_{n\in[N]} \Hat{\valpe}_{t,n} x_{n}\leq \rho \,.
 \end{aligned}
\end{align}
Here, recall $\Hat{\bm{g}}_{t} \in \Delta_{N}$ is the ML advice obtained in period $t$ which is an estimate for the true value distribution $\bm{g}\in \Delta_{N}$. Further, define the following 
values
\begin{align}
\label{eq:violvals}
\begin{aligned}
  (A) = \left(\util(d) -\sum_{n\in[N]} \valpe_{n}\V_{n} \Hat{x}_{t,n} \right)_{+},\quad (B) = \left(-\sum_{n\in[N]} \valpe_{n}\left(\V_{n} - \gamma d\right)\Hat{x}_{t,n}\right)_{+},\quad (C) = \left(d\sum_{n\in[N]} \valpe_{n} \Hat{x}_{t,n}- \rho\right)_{+}\,,
    \end{aligned}
\end{align}
where we recall $\util(d)$ is defined in Equation \eqref{eq:pricing:buyerutil}. Then, 
the values $(B), (C)$
 are upper bounded by $ \sqrt{N}\norm{\bm{g} - \Hat{\bm{g}}_{t}}$
for all $t$. Further, because the estimation error $\lim_{t\to \infty}\ell_{t} = 0$ there exists some $T_{0}\in \N$ s.t. $\norm{\bm{g} - \Hat{\bm{g}}_{t}}\leq \ell_{t} < \frac{g_{1}}{2}$ for all $t > T_{0}$. Then, there exists an absolute constant $C$ that only depends on buyer model primitives $(\bm{g},\bm{\V},\rho, \gamma)$ s.t. the values $(A)$ and $\norm{\bm{x}_{d}-\Hat{\bm{x}}_{t}}$ are  upper bounded by $ C\sqrt{N}\norm{\bm{g} - \Hat{\bm{g}}_{t}}$ for $t > T_{0}$, where $\bm{x}_{d}$ is the optimal solution to $\util(d)$.
\end{lemma}

We now show a high probability bound for  $\frac{\rev(d)}{d} -\sum_{t\in \mathcal{E}}z_{t}$. Assume event $\widetilde{\mathcal{G}}$ (Equation \eqref{eq:pricing:azuma}) holds, then
\begin{align*}
& \left|\sum_{t\in \mathcal{E}}\left(\frac{\rev(d)}{d} - z_{t}\right)\right|
    ~=~ \left|\sum_{t\in \mathcal{E}}\left(\sum_{n\in [N]} x_{d,n}g_{n}- z_{t}\right)\right| \\
   ~\leq~& \left|\sum_{t\in \mathcal{E}}\left(\sum_{n\in [N]} \Hat{x}_{t,n}g_{n} - z_{t}\right)\right| +
  \sum_{t\in \mathcal{E}} \left|\sum_{n\in [N]} \Hat{x}_{t,n}g_{n} - \sum_{n\in [N]} x_{d,n}g_{n} \right|\\
     ~\overset{(i)}{\leq}~ &
     \sqrt{2|\mathcal{E}|\log\left(2T\right)} + \sum_{t\in \mathcal{E}}  \norm{\bm{x}_{d} - \Hat{\bm{x}}_{t}} \cdot \norm{\bm{g}}\\
     ~\overset{(ii)}{\leq}~ &
     \sqrt{2|\mathcal{E}|\log\left(2T\right)}+ T_{0}  + C\sqrt{N}  \sum_{t\in \mathcal{E}:t > T_{0}} \ell_{t}\\
     ~\leq~ &
     \sqrt{2|\mathcal{E}|\log\left(2T\right)}+ T_{0}  + C\sqrt{N}  \sum_{t\in \mathcal{E}} \ell_{t}\\
     ~\overset{(iii)}{\leq}~ &
     \sqrt{2|\mathcal{E}|\log\left(2T\right)}+ T_{0}  + C\sqrt{N}  \widetilde{\phi}(|\mathcal{E}|)\\
     ~:=~  &\phi(|\mathcal{E}|)
\end{align*}
where in (i) we plugged in the Azuma-Hoeffding inequality result showed in Equation (\ref{eq:pricing:azuma}) with $\delta= \frac{1}{T}$; in (ii) we applied Lemma \ref{lem:boundsol} and some constant absolute constant $C$ for $t > T_{0}$ (defined in statement of Lemma \ref{lem:boundsol}), and the fact that $\norm{\bm{g}}\leq 1$ since $\bm{g}$ is a probability simplex; in (iii) we used the assumption that there exists some increasing function $\widetilde{\phi}$ s.t.
$ \sum_{t\in \mathcal{E}} \ell_{t} \leq \widetilde{\phi}(|\mathcal{E}|)$. Therefore w.p. at least $1-1/T$ (since $\widetilde{\mathcal{G}}$ holds w.p. at least $1-1/T$ when $\delta = 1/T$), we have
\begin{align*}
    \left| \frac{d}{|\mathcal{E}|} \sum_{t\in \mathcal{E}}z_{t} - \rev(d)\right| \leq \frac{\phi (|\mathcal{E}|)}{|\mathcal{E}|}
\end{align*}

Since $\widetilde{\phi}(x) \leq \mathcal{O}(x^{1-\tote})$, we know that $\phi(x) = \mathcal{O}(x^{1-\xi})$ for $\xi = \min\{\frac{1}{2},\tote\}$. Hence, for large enough $T$ s.t. the exploration episode length $E = T^{1-\xi + \epsilon} > T_{0}$, the buyer's approximate best responding with ML advice is $1-\xi$-adaptive for $\xi= \min\{\frac{1}{2},\tote\}$.

\paragraph{Part 2. Bounds for the buyer.}

We first follow a similar approach as the proof of Theorem \ref{corr:bestRespondBuyer} to upper bound the buyer regret. 

Let $d$ be the posted price in the final exploitation episode (see Algorithm \ref{alg:pricing}), and denote $\mathcal{E} = \Theta(T^{1-\xi + \epsilon})$ as all periods within exploration episodes. Then using the same arguments as in Equations \eqref{eq:brsingleutil} and \eqref{eq:brbenchmarkUB}, we can show the buyer's hindsight objective can be bounded as
\begin{align*}
 \opt(\bm{d}_{1:T}) 
    ~\leq ~&  \Theta(T^{\xi+\epsilon}) + (T-|\mathcal{E}|)\util(d)\,.
\end{align*} 
Since the buyer approximately best responds w.r.t. $\Hat{\bm{x}}_{t}$ which is the optimal solution to the problem $\Hat{\util}_{t}(d)$ Equation \eqref{eq:pricing:approxbuyerutil}, recall the buyer's decision $z_{t} $ can be written as in Equation \eqref{eq:buyerapprxaction}:
\begin{align*}
    z_{t} = \sum_{n \in [N]}  \Hat{x}_{t,n}\I\{v_{t}= V_{n}\}
\end{align*}
Hence, $\E[v_{t}z_{t} | \mathcal{F}_{t-1}] = \sum_{n\in [N]}\valpe_{n}\V_{n} \Hat{x}_{t,n}$. Let $C$ and $T_{0}$ be defined as in Lemma \ref{lem:boundsol},  and thus the buyer's regret can be thus bounded as followed
\begin{align*}
     \buyreg ~=~& \opt(\bm{d}_{1:T}) - \sum_{t\in [T]}\E\left[v_{t}z_{t}\right]\\
     ~\leq~& \Theta(T^{\xi+\epsilon})  + \sum_{t\in [T]/\mathcal{E}}\E\left[\left(\util(d) -g_{n}\Hat{x}_{t,n} \right)\right]\\
    ~\leq~& \Theta(T^{\xi+\epsilon}) + T_{0} + \sum_{t\in [T]/\mathcal{E}:t>T_{0}}\E\left[\left(\util(d) -g_{n}\Hat{x}_{t,n} \right)\right]\\
 ~\overset{(i)}{\leq}~& \Theta(T^{\xi+\epsilon}) + T_{0} + C\sqrt{N}\sum_{t\in [T]/\mathcal{E}:t>T_{0}}\norm{\bm{g} - \Hat{\bm{g}}_{t}}\\
 ~\overset{(ii)}{\leq}~& \Theta(T^{\xi+\epsilon}) + T_{0} + C\sqrt{N}\sum_{t\in [T]/\mathcal{E}}\ell_{t}\\
 ~\overset{(iii)}{\leq}~& \Theta(T^{\xi+\epsilon}) + T_{0} + C\sqrt{N}\widetilde{\phi}(T-|\mathcal{E}|)\\
 ~\overset{(iv)}{=}~& \Theta(T^{\xi+\epsilon})\,.
 \end{align*}
In (i), we applied Lemma \ref{lem:boundsol} for the value (A) defined in Equation \eqref{eq:violvals}; (ii) follows from the definition of the estimation errors $\ell_{t} \geq\norm{\bm{g} - \Hat{\bm{g}}_{t}} $; (iii) follows from the assumption that for any exploration or exploitation episode $\mathcal{E}_{h}$, the total error $\sum_{t\in \mathcal{E}_{h}}\ell_{t}$ is upper bounded by  $\widetilde{\phi}(T-|\mathcal{E}|$ where $\widetilde{\phi}$ is an increasing function; (iv) follows from the fact that 
$\widetilde{\phi}(x) \leq \mathcal{O}(T^{1-\tote})\leq \mathcal{O}(T^{1-\xi})$.

Now we show the buyer constraint violation 
is small, namely
\begin{align*}
    \frac{1}{T} \E\left[ \sum_{t\in [T]} \left(v_{t} - \gamma d_{t}\right)z_{t}\right]\geq - \Theta(T^{-\tote})\quad \text{and}\quad
     \frac{1}{T} \E\left[\sum_{t\in [T]}  d_{t}z_{t} \right]\leq \rho + \Theta(T^{-\tote})\,.
 \end{align*}
 The proofs for both inequalities are very similar, so here we just show $ \frac{1}{T} \E\left[ \sum_{t\in [T]} \left(v_{t} - \gamma d_{t}\right)z_{t}\right]\geq - \Theta(T^{-\tote})$. Similar to the above where we bounded buyer's regret, we have 
$\E[\left(v_{t} - \gamma d_{t}\right)z_{t} | \mathcal{F}_{t-1}] = \sum_{n\in [N]}\valpe_{n}\left(\V_{n} -\gamma d\right)\Hat{x}_{t,n}$, and thus for all exploration and exploitation episodes $\mathcal{E}_{1} \ldots \mathcal{E}_{H}$ (assuming there are $H$ episodes), we have
\begin{align*}
    -\E\left[ \sum_{t\in [T]} \left(v_{t} - \gamma d_{t}\right)z_{t}\right] ~=~& \sum_{t\in [T]}\E\left[  -\left(\sum_{n\in [N]}\valpe_{n}\left(\V_{n} -\gamma d\right)\Hat{x}_{t,n}\right)\right] \\
    ~\leq~& \sum_{t\in [T]}\E\left[  \left(-\sum_{n\in [N]}\valpe_{n}\left(\V_{n} -\gamma d\right)\Hat{x}_{t,n}\right)_{+}\right]\\
    ~\overset{(i)}{\leq}~& \sqrt{N}\sum_{t\in [T]}\ell_{t}\\
    ~=~& \sqrt{N}\sum_{h \in [H]}\sum_{t\in \mathcal{E}_{h}}\ell_{t}\\
     ~\overset{(ii)}{\leq}~& \sqrt{N}\sum_{h \in [H]}\mathcal{O}(|\mathcal{E}_{h}|^{1-\tote})\\
      ~=~& \Theta(T^{1-\tote})
\end{align*}
where (i) follows from the upper bound of (B) (Equation \eqref{eq:violvals}) in Lemma \ref{lem:boundsol}; (ii) follows from the
assumption that for any exploration and exploitation episode $\mathcal{E}_{h}$ the errors $\{\ell_{t}\}_{t}$ satisfy  $\sum_{t\in \mathcal{E}_{h}} \ell_{t} \leq \widetilde{\phi}(|\mathcal{E}_{h}|)$ for some increasing function $\widetilde{\phi}:\R_{+}\to \R^{+}$ and $\widetilde{\phi}(x) \leq \mathcal{O}(x^{1-\tote})$. 

Finally, dividing both sides by $T$ yields the desired bound  $ \frac{1}{T} \E\left[ \sum_{t\in [T]} \left(v_{t} - \gamma d_{t}\right)z_{t}\right]\geq - \Theta(T^{-\tote})$.
\qed

\subsection{Proof of Theorem \ref{thm:sellregretMLee}}
\label{pf:thm:sellregretMLee}
We know that the empirical estimates $\Hat{\bm{g}}_{t}\in \Delta_{N}$ for the buyer's value distribution $\bm{g}\in \Delta_{N}$ defined in Equation \eqref{eq:empiricalestimates} follow a multinomial distribution, i.e. $\Hat{\bm{g}}_{t} \sim \frac{1}{t}\text{Multinomial}(t,\bm{g})$. Therefore, applying Lemma \ref{lem:eeconc} by taking $\delta = 1/T^{2}$ , we have w.p. at least $1-1/T^{2}$ the following event  holds
\begin{align}
    \mathcal{G}_{t}:=\left\{\norm{\Hat{\bm{g}}_{t}-\bm{g}} \leq \ell_{t} :=  \sqrt{\frac{2N\log(2T^{2})}{t}}\right\}
\end{align}
Here we used the fact that $\norm{\bm{x}}\leq \norm{\bm{x}}_{1}$ for any vector $\bm{x}$. Hence, using a simple union bound, the event $ \cup_{t\in [T]}\mathcal{G}_{t}$ holds w.p. at least $1-1/T$. Further, for any exploration or exploitation episode $\mathcal{E}$, we have
\begin{align}
    \sum_{t\in \mathcal{E}}\ell_{t}  \leq 
     \sum_{\tau\in [|\mathcal{E}|]} \sqrt{\frac{2N\log(2T^{2})}{\tau}} \leq \widetilde{\phi}(|\mathcal{E}|) 
\end{align}
for some increasing function $\widetilde{\phi}$ s.t. $\widetilde{\phi}(x) \leq \mathcal{O}(x^{\frac{1}{2}})$. Hence, w.p. at least $1-1/T$, the estimation errors $\{\ell_{t}\}_{t}$ defined above satisfy the conditions in Theorem \ref{thm:sellregretML} for large enough $T$, i.e. $\lim_{t\to \infty}\ell_{t} = 0$ and $\sum_{t\in \mathcal{E}} \ell_{t} \leq \widetilde{\phi}(|\mathcal{E}|)$ for any any exploration or exploitation episode $\mathcal{E}$ where increasing function $\widetilde{\phi}:\R_{+}\to \R^{+}$ and $\widetilde{\phi}(x) \leq \mathcal{O}(x^{1-\tote})$. The rest of the proof directly follows from Theorem \ref{thm:sellregretML}.

\qed

\subsection{Proof of Lemma \ref{lem:boundsol}}
\label{pf:lem:boundsol}

Consider the region 
\begin{align}
    \mathcal{C} = \left\{\bm{x}\in [0,1]^{N}: 
-\sum_{n\in[N]} \valpe_{n}\V_{n} x_{n} \leq -\util(d), -\sum_{n\in[N]} \valpe_{n}\left(\V_{n} - \gamma d\right)x_{n} \leq 0, d\sum_{n\in[N]} \valpe_{n} x_{n}\leq \rho \right\}
\end{align}
By Lemma \ref{lem:thresholdOptSol}, we know that $\bm{x}_{d}$ is the unique optimal solution to $\util(d)$, and hence $\mathcal{C}$ consists of the single point 
$\bm{x}_{d}$, namely $\mathcal{C} = \{\bm{x}_{d}\}$. Now consider the optimal solution $\Hat{\bm{x}}_{t} \in [0,1]^{N}$ to $\Hat{\util}_{t}(d)$ in Equation \eqref{eq:pricing:approxbuyerutil}, by the Hoffman bound (Lemma \ref{lem:hoffman}), there exists some constant $H > 0$ that only depends on $(\bm{g},\bm{V})$ s.t.
\begin{align}
\label{eq:solapproxgood}    \norm{\Hat{\bm{x}}_{t}- \bm{x}_{d}} \leq H \left( \underbrace{\left(\util(d) -\sum_{n\in[N]} \valpe_{n}\V_{n} \Hat{x}_{t,n} \right)_{+}}_{(A)} + \underbrace{\left(-\sum_{n\in[N]} \valpe_{n}\left(\V_{n} - \gamma d\right)\Hat{x}_{t,n}\right)_{+}}_{(B)} + \underbrace{\left(d\sum_{n\in[N]} \valpe_{n} \Hat{x}_{t,n}- \rho\right)_{+}}_{C} \right)
\end{align}
where we used the inequality $\norm{(\bm{y})_{+}}\leq \sum_{n \in [N]}(y_{n})_{+}$ for any vector $y\in \R^{N}$. We now bound (A), (B) and (C) respectively.

\paragraph{Bounding (A).}
Similar to the proof of Theorem \ref{corr:bestRespondBuyer}, strong duality holds for the LP $\Hat{\util}_{t}(d)$, and hence there exists optimal dual variables $\Hat{\lambda},\Hat{\mu} \in \R_{+}$ s.t. 
\begin{align}
\label{eq:approxutilsd}
     \Hat{\util}_{t}(d) = \sum_{n\in[N]} \valpe_{n}\V_{n} \Hat{x}_{t,n}  = \max_{\bm{x}\in[0,1]^{N}}\sum_{n \in [N]}\Hat{g}_{t,n}\left( (1+\Hat{\lambda})V_{n} - (\gamma \Hat{\lambda} + \Hat{\mu}) d\right)x_{n} + \rho \Hat{\mu}
\end{align}
Since $\Hat{\util}_{t}(d) \leq 1$, it is easy to see $\Hat{\mu} \in [0,1/\rho]$, and further by considering $\bm{x} = (1, 0\ldots 0) \in \R^{N}$, we have 
\begin{align}
\label{eq:boundDV}
\begin{aligned}
& 1\geq \Hat{\util}_{t}(d)~ \geq~ \Hat{g}_{t,1}\left( (1+\Hat{\lambda})V_{1} - (\gamma \Hat{\lambda} + \Hat{\mu}) d\right)
~\overset{(i)}{\geq}~  \Hat{g}_{t,1} \Hat{\lambda}(V_{1} -\gamma d)  - \Hat{\mu}d 
~\overset{(ii)}{\geq}~  \frac{g_{1}}{2} \cdot  \Hat{\lambda}(V_{1} -\gamma d)  - \frac{1}{\rho} \\
~\overset{(iii)}{\Rightarrow} ~ &  \Hat{\lambda}\leq 
2\left(1 +\frac{1}{\rho}\right)\frac{\V_{1}-\gamma D_{1}}{g_{1}}\,,
\end{aligned}
\end{align}
where in (i) we used the fact that $\Hat{g}_{t,1}\in [0,1]$; in (ii) we used the fact that $ |\Hat{g}_{t,1} - g_{1}|\leq \norm{\Hat{\bm{g}}_{t} - \bm{g}} \leq \ell_{t} < \frac{g_{1}}{2}$ for all $t > T_{0}$, and also $d\in [0,1]$ as well as $\Hat{\mu} \in [0,1/\rho]$; in (iii), we used Assumption \ref{assum:price:nontriv} s.t. $V_{1} - \gamma d > 0$ for all $d \in \mathcal{D}$, and $g_{1} > 0$.

On the other hand, we have
\begin{align}
\begin{aligned}
\label{eq:approxdifobj}
    \util(d) ~\leq~ &  \max_{\bm{x}\in[0,1]^{N}}\sum_{n \in [N]}g_{n}\left( (1+\Hat{\lambda})V_{n} - (\gamma \Hat{\lambda} + \Hat{\mu}) d\right)x_{n} + \rho \Hat{\mu}\\
    ~\overset{(i)}{\leq}~ &  \max_{\bm{x}\in[0,1]^{N}}\sum_{n \in [N]}\Hat{g}_{t,n}\left( (1+\Hat{\lambda})V_{n} - (\gamma \Hat{\lambda} + \Hat{\mu}) d\right)x_{n}  + (1+\Hat{\lambda}) \sum_{n\in [N]} |\Hat{g}_{t,n} - g_{n}|+\rho \Hat{\mu}\\
    ~\overset{(ii)}{\leq}~ &  \max_{\bm{x}\in[0,1]^{N}}\sum_{n \in [N]}\Hat{g}_{t,n}\left( (1+\Hat{\lambda})V_{n} - (\gamma \Hat{\lambda} + \Hat{\mu}) d\right)x_{n} +\rho \Hat{\mu} + (1+\Hat{\lambda})\sqrt{N}\norm{\Hat{\bm{g}}_{t}-\bm{g}} \\
     ~\overset{(iii)}{=}~ &   \Hat{\util}_{t}(d)  + 2\left(1 +\frac{1}{\rho}\right)\frac{\V_{1}-\gamma D_{1}}{g_{1}}\cdot \sqrt{N}\norm{\Hat{\bm{g}}_{t}-\bm{g}}
    \end{aligned}
\end{align}

In (i), we used the fact that for all $n \in [N]$, $x_{n}\in [0,1] $ and $(1+\Hat{\lambda})V_{n} - (\gamma \Hat{\lambda} + \Hat{\mu}) d \leq (1+\Hat{\lambda})V_{n} \leq 1+\Hat{\lambda}$ since all possible values $V_{n} \in [0,1]$; (ii) applies Cauchy–Schwarz inequality; (iii) plugs in Equation \eqref{eq:approxutilsd} and \eqref{eq:boundDV}.

Therefore, if $\Hat{\util}_{t}(d) = \sum_{n\in[N]} \valpe_{n}\V_{n} \Hat{x}_{t,n} \geq \util(d)$, then (A) = 0, whereas if $\Hat{\util}_{t}(d) = \sum_{n\in[N]} \valpe_{n}\V_{n} \Hat{x}_{t,n} < \util(d)$, Equation \eqref{eq:approxdifobj} implies 
\begin{align}
\label{eq:approxdifobj1}
    (A)\leq 2\left(1 +\frac{1}{\rho}\right)\frac{\V_{1}-\gamma D_{1}}{g_{1}}\sqrt{N}\norm{\Hat{\bm{g}}_{t}-\bm{g}}
\end{align}

\paragraph{Bounding (B) and (C).} The bounds for (B) and (C) are similar, and therefore we only show that for (B).
\begin{align}
\label{eq:ROIapproxgood}
\begin{aligned}
   (B)= \left( -\sum_{n\in[N]} \valpe_{n}\left(\V_{n} - \gamma d\right)\Hat{x}_{t,n}\right)_{+}
    ~\overset{(i)}{\leq}~& \left(-\sum_{n\in[N]} \Hat{\valpe}_{t,n}\left(\V_{n} - \gamma d\right)\Hat{x}_{t,n} \right)_{+} + \left|\sum_{n\in[N]}\left(\Hat{\valpe}_{t,n} - \valpe_{n} \right)\left(\V_{n} - \gamma d\right)\Hat{x}_{t,n}\right|\\
     ~\overset{(ii)}{\leq}~& \sum_{n\in[N]}|\Hat{\valpe}_{t,n} - \valpe_{n}| \\
      ~\overset{(iii)}{\leq}~& \sqrt{N}\norm{\Hat{\bm{g}}_{t}-\bm{g}}
     \end{aligned}
\end{align}
Here, (i) follows from the basic inequality sequence $(a+b)_{+}\leq (a)_{+} + (b)_{+} \leq (a)_{+} + |b|$; (ii) follows from the fact that $\Hat{\bm{x}}_{t}$ is feasible to $\Hat{\util}_{t}(d)$ so that $\sum_{n\in[N]} \Hat{\valpe}_{t,n}\left(\V_{n} - \gamma d\right)\Hat{x}_{t,n} \geq 0$, and also
$|V_{n}- \gamma d|\leq V_{n}\leq 1$ and $\Hat{x}_{t,n} \in [0,1]$; (iii) follows from the Cauchy–Schwarz inequality.

We can similarly show 
\begin{align}
\label{eq:budgetapproxgood}
\begin{aligned}
   (C)\leq \sqrt{N}\norm{\Hat{\bm{g}}_{t}-\bm{g}}
     \end{aligned}
\end{align}

Finally, combining Equations \eqref{eq:solapproxgood}, \eqref{eq:approxdifobj1}, \eqref{eq:ROIapproxgood}, and \eqref{eq:budgetapproxgood} yields the desired result.
\qed

\section{Supplementary Lemmas}



\begin{lemma}[Azuma–Hoeffding inequality]
\label{lem:azuma}
beLet $Y_{1} \dots Y_{n}$  be a martingale difference sequence with a uniform bound $|Y_{j}| \leq 1$ for all $j\in [n]$. Then  for any $\delta \in (0,1/e)$,
\begin{align*}
    \prob\left(\left|\sum_{j\in[n]}Y_{j}\right|  > \sqrt{2n\log(2/\delta)} \right) \leq \delta\,.
\end{align*}

\end{lemma}

\begin{lemma}[Hoffman bound \cite{hoffman2003approximate}]
\label{lem:hoffman}
Consider a non-empty linear region $\mathcal{C} = \left\{\bm{x}\in \R^{n}: \bm{A}\bm{x} \leq \bm{b}\right\}$ for some $\bm{b} \in \R^{n}$ and $\bm{A} \in \R^{m\times n}$. Then, there exists some constant $H >0 $ that only depends on $\bm{A}$ s.t. for any $\bm{y} \in \R^{n}$ we have $\inf_{\bm{z} \in \mathcal{C}} \norm{\bm{z} - \bm{y}}\leq H \norm{(\bm{A}\bm{y} - \bm{b})_{+}}$. Here $(\bm{y})_{+}$ is the vector that takes the positive parts for each entry in $\bm{y}$, i.e. $(\bm{y})_{+} = \left((y_{1})_{+}\ldots (y_{n})_{+}\right)$.
\end{lemma} 

\begin{lemma}[Empirical distribution concentration inequality \cite{weissman2003inequalities}]
\label{lem:eeconc}
Let $\bm{g}\in \Delta_{N}$ be a $N$-dimensional probability simplex $(N\geq 2)$, and $\Hat{\bm{g}}_{t} \sim \frac{1}{t}\text{Multinomial}(t,\bm{g})$. Then for any $\delta \in (0,1)$, we have
\begin{align*}
    \prob\left(\norm{\Hat{\bm{g}}_{t}-\bm{g}}_{1}> \sqrt{\frac{2N\log(2/\delta)}{t}}\right) \leq \delta
\end{align*}
\end{lemma}
See also \cite{qian2020concentration} Proposition 2. for a similar statement to that of Lemma \ref{lem:eeconc}.

\end{APPENDICES}

\end{document}